\newcommand{\minitab}[2][l]{\begin{tabular}{#1}#2\end{tabular}}
\newcommand\niton{\mathrel{\m@th\mathpalette\canc@l\owns}}
\newcommand\canc@l[2]{{\ooalign{$\hfil#1/\mkern1mu\hfil$\crcr$#1#2$}}}
\newcommand{\textoverline}[1]{$\overline{\mbox{#1}}$}
\newtheorem{definition}{Definition}
\newtheorem{theorem}{Theorem}
\newtheorem{example}{Example}
\renewcommand{\arraystretch}{1.25}
\let\oldexample\example
\renewcommand{\example}{\oldexample\normalfont}
\newtheoremstyle{cited}%
  {3pt}
  {3pt}
  {\itshape}
  {}
  {\bfseries}
  {.}
  {.5em}
  {\thmname{#1} \thmnumber{#2} \thmnote{\normalfont#3}}
\theoremstyle{cited}
\newtheorem{lemma}{Lemma}
\def\BibTeX{{\rm B\kern-.05em{\sc i\kern-.025em b}\kern-.08em
    T\kern-.1667em\lower.7ex\hbox{E}\kern-.125emX}}
\begin{document}
\title{Regular Path Query Evaluation Sharing a Reduced Transitive Closure Based on Graph Reduction}
\author{
\IEEEauthorblockN{Inju Na}
\IEEEauthorblockA{\textit{School of Computing} \\
\textit{KAIST}\\
Daejeon, Republic of Korea\\
ijna@kaist.ac.kr}
\and
\IEEEauthorblockN{Yang-Sae Moon\textsuperscript{*}}
\IEEEauthorblockA{\textit{Department of Computer Science} \\
\textit{Kangwon National University}\\
Chuncheon, Republic of Korea\\
ysmoon@kangwon.ac.kr}
\thanks{\textsuperscript{*}A corresponding author.}
\and
\IEEEauthorblockN{Ilyeop Yi, Kyu-Young Whang, Soon J. Hyun}
\IEEEauthorblockA{\textit{School of Computing} \\
\textit{KAIST}\\
Daejeon, Republic of Korea\\
\{iyyi, kywhang, sjhyun\}@kaist.ac.kr}
}

\maketitle
This work has been submitted to the IEEE for possible publication. Copyright may be transferred without notice, after which this version may no longer be accessible.\\\\\\
\begin{abstract}
Regular path queries (RPQs) find pairs of vertices of paths satisfying given regular expressions on an edge-labeled, directed multigraph. When evaluating an RPQ, the evaluation of a Kleene closure (i.e., Kleene plus or Kleene star) is very expensive. Furthermore, when multiple RPQs include a Kleene closure as a common sub-query, repeated evaluations of the common sub-query cause serious performance degradation. In this paper, we present a novel concept of \textit{RPQ-based graph reduction}, which significantly simplifies the original graph through edge-level and vertex-level reductions. Interestingly, RPQ-based graph reduction can replace the evaluation of the Kleene closure on the large original graph to that of the transitive closure to the small reduced graph. We then propose a reduced transitive closure (RTC) as a lightweight structure for efficiently sharing the result of a Kleene closure. We also present an RPQ evaluation algorithm, \textit{RTCSharing}, which treats each clause in the disjunctive normal form of the given RPQ as a batch unit. If the batch units include a Kleene closure as a common sub-query, we share the lightweight RTC instead of the heavyweight result of the Kleene closure. RPQ-based graph reduction further enables us to formally represent the result of an RPQ including a Kleene closure as a relational algebra expression including the RTC. Through the formal expression, we optimize the evaluation of the batch unit by eliminating \textit{useless} and \textit{redundant operations} of the previous method. Experiments show that RTCSharing improves the performance significantly by up to 73.86 times compared with existing methods in terms of query response time.
\end{abstract}

\begin{IEEEkeywords}
regular path queries (RPQ), query optimization, graph reduction, reduced transitive closure (RTC), RTCSharing
\end{IEEEkeywords}

\section{Introduction}
A regular path query (RPQ) is a regular expression that finds ordered pairs of vertices of paths satisfying the RPQ on an edge-labeled, directed multigraph~\cite{Men95}. Here, paths satisfy an RPQ when a sequence of labels matches the RPQ. The RPQ is attracting attention as an essential and important operation for graph data such as social networks and the Semantic Web and is supported by typical graph query languages such as SPARQL 1.1~\cite{Spa13} and Cypher~\cite{Cyp18}. RPQs can be used in a variety of applications such as signal path detection in protein networks, recommending friends in social networks, and extracting information from linked open data~\cite{Ngu17,Yak16}. 

To evaluate an RPQ, it is necessary to traverse the graph from each vertex and perform pattern matching for labels of edges accessed during the traversal, finding the paths that satisfy the query. As such, evaluating an RPQ is a complex operation that combines graph traversal and pattern matching~\cite{Gra03a}. In particular, when an RPQ includes a Kleene closure (i.e., Kleene plus or Kleene star), the evaluation is more expensive because there can be long paths of varying lengths satisfying the RPQ~\cite{Gra03b}. Therefore, optimization of evaluating RPQs is actively being studied as an important issue~\cite{Abu17,Fle16,Kos12,Ngu17,Yak15,Yak16}. Optimization of evaluating RPQs needs more attention when evaluating multiple RPQs. When there is a common sub-query among multiple RPQs, evaluating these RPQs individually leads to repeated evaluations of the common sub-query. This problem of repeated evaluations can be solved by evaluating the common sub-query once and sharing the result among the RPQs. However, when a common sub-query includes a Kleene closure, evaluating the common sub-query itself is expensive.

In this paper, we present a novel concept of \textit{RPQ-based graph reduction}, which converts an original labeled multigraph \textit{G} to an unlabeled simple graph \textit{\textoverline{G\textsubscript{R}}}. RPQ-based graph reduction consists of two levels: 1) edge-level reduction producing an unlabeled graph \textit{G\textsubscript{R}} from \textit{G} by mapping paths satisfying \textit{R} on \textit{G} to edges of \textit{G\textsubscript{R}} and 2) vertex-level reduction producing a much reduced graph \textit{\textoverline{G\textsubscript{R}}} from \textit{G\textsubscript{R}} by representing each strongly connected component (SCC) as one vertex, which we define in Section 3. Interestingly, RPQ-based graph reduction can replace the evaluation of the Kleene closure on the large original graph \textit{G} to that of the transitive closure to the small reduced graph \textit{\textoverline{G\textsubscript{R}}}. Even though the graph reduction incurs a little overhead, the performance gain of using the reduced graph is much larger than the overhead. Based on the RPQ-based graph reduction, we then propose a reduced transitive closure (RTC) as a lightweight structure for efficiently sharing the result of a Kleene closure. Let \textit{R\textsuperscript{+}\hspace{-0.12cm}\textsubscript{G}} (or \textit{R\textsuperscript{*}\hspace{-0.12cm}\textsubscript{G}}) be the evaluation result of a Kleene closure \textit{R\textsuperscript{+}} (or \textit{R\textsuperscript{*}}) for any regular expression \textit{R} on the graph \textit{G}. Since \textit{R\textsuperscript{*}\hspace{-0.12cm}\textsubscript{G}} can be easily derived from \textit{R\textsuperscript{+}\hspace{-0.12cm}\textsubscript{G}}, we hereafter use \textit{R\textsuperscript{+}} and \textit{R\textsuperscript{+}\hspace{-0.12cm}\textsubscript{G}} only unless confusion occurs. In the final algorithms (Algorithms 1 and 2), we also deal with \textit{R\textsuperscript{*}} by simply deriving it from \textit{R\textsuperscript{+}}. In Theorem 1, we formally show that \textit{R\textsuperscript{+}\hspace{-0.12cm}\textsubscript{G}}, the evaluation result of \textit{R\textsuperscript{+}} on the original graph \textit{G}, can be easily calculated from the transitive closure (i.e., RTC) of the reduced graph \textit{\textoverline{G\textsubscript{R}}}.

We also propose an RPQ evaluation algorithm, Reduced Transitive Closure Sharing (\textit{RTCSharing}), which converts the given RPQ to a logically equivalent disjunctive normal form (DNF) and evaluates it treating each clause as a batch unit. As in~\cite{Dav90}, we can convert all RPQs to a logically equivalent DNF treating each outermost Kleene closure as a literal. If multiple batch units include a Kleene closure as a common sub-query, we share the lightweight RTC instead of the heavyweight result of the Kleene closure. RPQ-based graph reduction further enables us to represent the result of an RPQ including a Kleene closure as a relational algebra expression in the form of a join sequence including the RTC. By representing the batch unit as a relational algebra expression, RTCSharing optimizes the evaluation of the batch unit by eliminating \textit{redundant} and \textit{useless operations} (see Section 4.2 for formal definitions) as follows: (1) it eliminates redundant operations (in effect, mostly duplicate check operations) that might occur in the next step removing redundant elements by unioning the intermediate results of each join step; and (2) it also eliminates useless operations caused by selection through \textit{Prefix} and by the property of the reduced graph \textit{\textoverline{G\textsubscript{R}}}. 

The contributions of this paper are summarized as follows:

\begin{itemize}
    \item We present a novel notion of \textit{RPQ-based graph reduction} that converts a labeled multigraph to an unlabeled simple graph for the efficient evaluation of a Kleene closure. 
    
    \item We propose an RTC as a lightweight structure to efficiently share the result of a common sub-query among RPQs whose common sub-query is a Kleene closure \textit{R\textsuperscript{+}} for any regular expression \textit{R}.
    
    \item We show that \textit{R\textsuperscript{+}} can be evaluated as the transitive closure of the edge-level reduced graph \textit{G\textsubscript{R}} (Lemma~\ref{lemma:RstarG equals RTC of GR}), and further, can be calculated from the transitive closure of the two-level reduced graph \textit{\textoverline{G\textsubscript{R}}} (i.e., the RTC) (Lemma~\ref{lemma:RTC of GR equals Cartesian product of RTC of GRbar} and Theorem~\ref{theorem:R+G equals Cartesian product of RTC of GRbar}). 
    
    \item We propose an RPQ evaluation algorithm, \textit{RTCSharing}, that uses each clause in the DNF of the given RPQ as a batch unit. RTCSharing shares the RTC instead of \textit{R\textsuperscript{+}\hspace{-0.12cm}\textsubscript{G}} among batch units as the result of the common sub-query \textit{R\textsuperscript{+}}.
    
    \item We show that the result of an RPQ including a Kleene closure can be represented as a relational algebra expression including the RTC and exploit it for optimizing the evaluation of the RPQ by eliminating \textit{useless} and \textit{redundant operations}.

    \item Through experiments with synthetic and real datasets, we show that \textit{RTCSharing} significantly outperforms existing algorithms: 1) the recent algorithm that shares \textit{R\textsuperscript{+}\hspace{-0.12cm}\textsubscript{G}} among RPQs\cite{Abu17} and 2) the naive algorithm that shares nothing among RPQs\cite{Yak16}.
\end{itemize}

The paper is organized as follows. Section~\ref{sec:Preliminaries} describes the preliminary background. Section~\ref{sec:Intermediate Representation} proposes RPQ-based graph reduction and the RTC. Section~\ref{sec:Multiple Regular Path Queries(MRPQs) Evaluation Method} describes the \textit{RTCSharing} algorithm with the optimization. Section~\ref{sec:Performance Evaluation} shows the results of performance evaluation for the proposed methods in comparison with existing RPQ evaluation algorithms. Section~\ref{sec:Related Work} summarizes the related work. Section~\ref{sec:Conclusion} concludes the paper.

\section{Preliminaries}
\label{sec:Preliminaries}
In this section, we introduce the background of regular path query research. Section~\ref{subsec:Data Model} describes the data model used for regular path queries. Section~\ref{subsec:Regular Path Query(RPQ)} describes the definition of regular path queries and methods of evaluating them. Section~\ref{subsec:Evaluation Method of Multiple RPQs} describes methods of evaluating multiple regular path queries. 

\subsection{Data Model}
\label{subsec:Data Model}
The data used for regular path queries is an edge-labeled, directed multigraph \textit{G}, which is defined as a 5-tuple (\textit{V}, \textit{E}, \textit{f}, $\mathit{\Sigma}$, \textit{l})~\cite{Men95}. \textit{V} is a set of vertices. \textit{E} is a set of directional edges. \textit{f}: \textit{E} $\rightarrow$ \textit{V}$\times$\textit{V} is a function that maps each edge to an ordered pair of two vertices connected by the edge. $\mathit{\Sigma}$ is a set of labels. \textit{l}: \textit{E} $\rightarrow$ $\mathit{\Sigma}$ is a function that maps each edge to its label. In \textit{G}, multiple edges between any two vertices are allowed, but the labels of these edges must be distinct. Fig.~\ref{fig:an example of graph} shows a graph, which will be used as an example graph throughout the paper.

\begin{figure}[b]
\centering
\includegraphics[scale= 0.88]{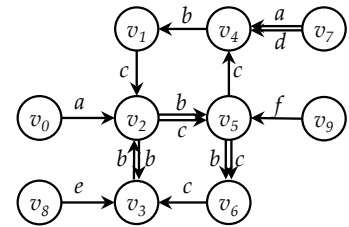}
\caption{An edge-labeled, directed multigraph.}
\label{fig:an example of graph}
\end{figure}

TABLE~\ref{tab:notations for G} summarizes the notation related to \textit{G} to be used in this paper. Each vertex has a unique ID (VID). A vertex with VID \textit{i} is denoted by \textit{v\textsubscript{i}}. A label has a unique ID (LID). A label with LID \textit{i} is denoted by \textit{l\textsubscript{i}}. An edge from \textit{v\textsubscript{s}} to \textit{v\textsubscript{d}} with label \textit{l\textsubscript{i}} is denoted by \textit{e}(\textit{v\textsubscript{s}}, \textit{l\textsubscript{i}}, \textit{v\textsubscript{d}}). A path from \textit{v\textsubscript{s}} to \textit{v\textsubscript{d}}, which is a sequence of \textit{e}(\textit{v\textsubscript{s}}, \textit{l\textsubscript{i\textsubscript{1}}}, \textit{v\textsubscript{j\textsubscript{1}}}), \textit{e}(\textit{v\textsubscript{j\textsubscript{1}}}, \textit{l\textsubscript{i\textsubscript{2}}}, \textit{v\textsubscript{j\textsubscript{2}}}), $\cdots$, \textit{e}(\textit{v\textsubscript{j\textsubscript{n-1}}}, \textit{l\textsubscript{i\textsubscript{n}}}, \textit{v\textsubscript{d}}) is denoted by \textit{p}(\textit{v\textsubscript{s}}, \textit{l\textsubscript{i\textsubscript{1}}}, \textit{v\textsubscript{j\textsubscript{1}}}, \textit{l\textsubscript{i\textsubscript{2}}}, \textit{v\textsubscript{j\textsubscript{2}}}, $\cdots$, \textit{l\textsubscript{i\textsubscript{n}}}, \textit{v\textsubscript{d}}). A sequence of labels \textit{l\textsubscript{i\textsubscript{1}}}\textit{l\textsubscript{i\textsubscript{2}}}$\cdots$\textit{l\textsubscript{i\textsubscript{n}}} is called a path label~\cite{Men95,Yak16}. If it is not ambiguous, \textit{p}(\textit{v\textsubscript{s}}, \textit{l\textsubscript{i\textsubscript{1}}}, \textit{v\textsubscript{j\textsubscript{1}}}, \textit{l\textsubscript{i\textsubscript{2}}}, \textit{v\textsubscript{j\textsubscript{2}}}, $\cdots$, \textit{l\textsubscript{i\textsubscript{n}}}, \textit{v\textsubscript{d}}) is simply denoted by \textit{p}(\textit{v\textsubscript{s}}, \textit{v\textsubscript{d}}).

\begin{table}[b]
    \caption{The notation related to \textit{G}.}
    \label{tab:notations for G}
    \centering{}
    \begin{tabular}{|c|l|}
        \hline
        Notation & \multicolumn{1}{|c|}{Description}\\
        \hline
        \hline
        \textit{v\textsubscript{i}} &A vertex with VID i\\
        \hline
        \textit{l\textsubscript{i}} &A label with LID i\\
        \hline
        \textit{e}(\textit{v\textsubscript{s}}, \textit{l\textsubscript{i}}, \textit{v\textsubscript{d}}) &An edge from \textit{v\textsubscript{s}} to \textit{v\textsubscript{d}} with label \textit{l\textsubscript{i}}\\
        \hline
        \begin{tabular}[c]{@{}c@{}} \textit{p}(\textit{v\textsubscript{s}}, \textit{l\textsubscript{i\textsubscript{1}}},$\cdots$, \textit{l\textsubscript{i\textsubscript{n}}}, \textit{v\textsubscript{d}})\\or \textit{p}(\textit{v\textsubscript{s}}, \textit{v\textsubscript{d}})\end{tabular}& \begin{tabular}[c]{@{}l@{}}A path from \textit{v\textsubscript{s}} to \textit{v\textsubscript{d}}, which is a sequence of \\
        e(\textit{v\textsubscript{s}}, \textit{l\textsubscript{i\textsubscript{1}}}, \textit{v\textsubscript{j\textsubscript{1}}}), $\cdots$, e(\textit{v\textsubscript{j\textsubscript{n-1}}}, \textit{l\textsubscript{i\textsubscript{n}}}, \textit{v\textsubscript{d}})\end{tabular}\\
        \hline
        \textit{l\textsubscript{i\textsubscript{1}}}\textit{l\textsubscript{i\textsubscript{2}}}$\cdots$\textit{l\textsubscript{i\textsubscript{n}}} &  \begin{tabular}[c]{@{}l@{}} The path label of a path\\\textit{p}(\textit{v\textsubscript{s}}, \textit{l\textsubscript{i\textsubscript{1}}}, \textit{v\textsubscript{j\textsubscript{1}}}, \textit{l\textsubscript{i\textsubscript{2}}}, \textit{v\textsubscript{j\textsubscript{2}}}, $\cdots$, \textit{l\textsubscript{i\textsubscript{n}}}, \textit{v\textsubscript{d}}) \end{tabular}\\
        \hline
    \end{tabular}
\end{table}

\subsection{Regular Path Query(RPQ)}
\label{subsec:Regular Path Query(RPQ)}
A regular path query (RPQ) \textit{R} on a graph \textit{G} is a regular expression over $\mathit{\Sigma}$ and finds a set of ordered vertex pairs (start vertex, end vertex) of the paths satisfying \textit{R} in \textit{G}~\cite{Men95}~(Definition~\ref{def:path label}). That is, the evaluation result \textit{R\textsubscript{G}} of RPQ \textit{R} on \textit{G} is a set of ordered vertex pairs as is defined in Definition~\ref{def:RPQ result}.

\begin{definition}
\label{def:path label}
A path \textit{satisfies} an RPQ \textit{R} when a path label of the path matches R~\cite{Men95}.
\end{definition}

\begin{definition}
\label{def:RPQ result}
Given a graph \textit{G} and an RPQ \textit{R}, \\\textit{R\textsubscript{G}} = \{{\normalfont (}\textit{v\textsubscript{i}}, \textit{v\textsubscript{j}}{\normalfont )} $\mid$ a p{\normalfont (}\textit{v\textsubscript{i}}, \textit{v\textsubscript{j}}{\normalfont )} that satisfies \textit{R} exists in \textit{G}\}.
\end{definition}

\begin{example}
Fig.~\ref{fig:an example of RPQ} shows an example of \textit{R\textsubscript{G}}. (\textit{d}$\cdot$(\textit{b$\cdot$c})\textsuperscript{+}$\cdot$\textit{c})\textsubscript{\textit{G}} is the result of RPQ \textit{d}$\cdot$(\textit{b$\cdot$c})\textsuperscript{+}$\cdot$\textit{c} on \textit{G} where \textit{G} is an example graph shown in Fig.~\ref{fig:an example of graph}. The left side of the figure shows the paths that satisfy \textit{d}$\cdot$(\textit{b$\cdot$c})\textsuperscript{+}$\cdot$\textit{c}. The path labels of these paths are \textit{dbcc}, \textit{dbcbcc}, etc., all of which match \textit{d}$\cdot$(\textit{b$\cdot$c})\textsuperscript{+}$\cdot$\textit{c}. The right side shows (\textit{d}$\cdot$(\textit{b$\cdot$c})\textsuperscript{+}$\cdot$\textit{c})\textsubscript{\textit{G}}. (\textit{v\textsubscript{7}}, \textit{v\textsubscript{5}}) is included in (\textit{d}$\cdot$(\textit{b$\cdot$c})\textsuperscript{+}$\cdot$\textit{c})\textsubscript{\textit{G}} because paths \textit{p\textsubscript{1}}, \textit{p\textsubscript{3}}, etc., satisfying \textit{d}$\cdot$(\textit{b$\cdot$c})\textsuperscript{+}$\cdot$\textit{c} exist between vertices \textit{v\textsubscript{7}} and \textit{v\textsubscript{5}}. (\textit{v\textsubscript{7}}, \textit{v\textsubscript{3}}) is also included in (\textit{d}$\cdot$(\textit{b$\cdot$c})\textsuperscript{+}$\cdot$\textit{c})\textsubscript{\textit{G}} because paths \textit{p\textsubscript{2}}, \textit{p\textsubscript{4}}, etc., satisfying \textit{d}$\cdot$(\textit{b$\cdot$c})\textsuperscript{+}$\cdot$\textit{c}, exist between vertices \textit{v\textsubscript{7}} and \textit{v\textsubscript{3}}.
\end{example}

\begin{figure}[b]
\centering
\includegraphics[scale= 0.88]{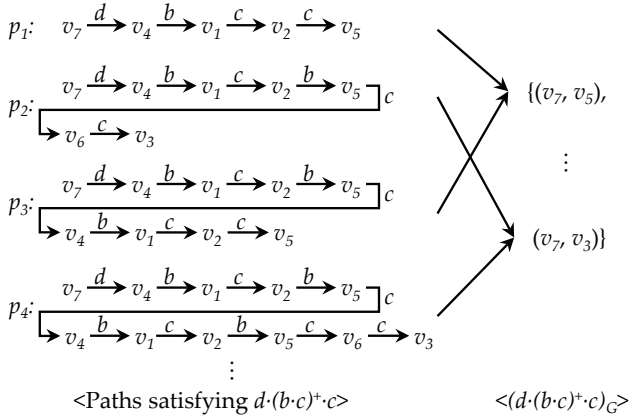}
\vspace{-0.2cm}
\caption{An example of \textit{R\textsubscript{G}}.}
\label{fig:an example of RPQ}
\end{figure}

To evaluate an RPQ \textit{R} on \textit{G}, we traverse \textit{G} from each vertex \textit{v\textsubscript{i}} and find the set of end vertices of the paths starting from \textit{v\textsubscript{i}} that satisfy \textit{R}. To find the paths satisfying \textit{R}, pattern matching is done on the labels of the edges that are accessed during traversal. Finite automata are usually used for pattern matching~\cite{Kos12,Men95,Ngu17,Yak15,Yak16}. Each traversal has a state of the finite automaton as the current state. If the state can be transited to the next state using the label of an edge originating from the current state, the traversal continues through that edge. If the current state is an accept state, (start vertex, end vertex) of the traversal is included in \textit{R\textsubscript{G}}. The traversal continues until there are no more traversable edges or the state is not able to be transited. Among algorithms based on finite automata, we use recent used one by Yakovets~\cite{Yak16} in experiments in Section~\ref{sec:Performance Evaluation}.

\begin{example}
Fig.~\ref{fig:an Example of RPQ Evaluation} shows an example of RPQ evaluation --- evaluation of the RPQ \textit{d}$\cdot$(\textit{b$\cdot$c})\textsuperscript{+}$\cdot$\textit{c} using the finite automata (i.e., NFA in Fig.~\ref{fig:an Example of RPQ Evaluation}) on the graph of Fig.~\ref{fig:an example of graph}. Through the traversal, (\textit{v\textsubscript{7}}, \textit{v\textsubscript{5}}) and (\textit{v\textsubscript{7}}, \textit{v\textsubscript{3}}) are included in (\textit{d}$\cdot$(\textit{b$\cdot$c})\textsuperscript{+}$\cdot$\textit{c})\textsubscript{\textit{G}} since \textit{p}(\textit{v\textsubscript{7}}, \textit{d}, \textit{v\textsubscript{4}}, \textit{b}, \textit{v\textsubscript{1}}, \textit{c}, \textit{v\textsubscript{2}}, \textit{c}, \textit{v\textsubscript{5}}) and \textit{p}(\textit{v\textsubscript{7}}, \textit{d}, \textit{v\textsubscript{4}}, \textit{b}, \textit{v\textsubscript{1}}, \textit{c}, \textit{v\textsubscript{2}}, \textit{b}, \textit{v\textsubscript{5}}, \textit{c}, \textit{v\textsubscript{6}}, \textit{c}, \textit{v\textsubscript{3}}), which satisfies \textit{d}$\cdot$(\textit{b$\cdot$c})\textsuperscript{+}$\cdot$\textit{c} starting from \textit{v\textsubscript{7}}, are found. In the case of \textit{p}(\textit{v\textsubscript{7}}, \textit{d}, \textit{v\textsubscript{4}}, \textit{b}, \textit{v\textsubscript{1}}, \textit{c}, \textit{v\textsubscript{2}}, \textit{b}, \textit{v\textsubscript{3}}), there is an edge \textit{e}(\textit{v\textsubscript{3}}, \textit{b}, \textit{v\textsubscript{2}}) accessible from \textit{v\textsubscript{3}}, but state transition cannot be done through the edge. Therefore, the traversal of the path is terminated. \textit{p}(\textit{v\textsubscript{7}}, \textit{d}, \textit{v\textsubscript{4}}, \textit{b}, \textit{v\textsubscript{1}}, \textit{c}, \textit{v\textsubscript{2}}, \textit{b}, \textit{v\textsubscript{5}}, \textit{c}, \textit{v\textsubscript{4}}, \textit{b}, \textit{v\textsubscript{1}}) is a special case. Since the end vertex \textit{v\textsubscript{1}} of the path has already been visited by \textit{p}(\textit{v\textsubscript{7}}, \textit{d}, \textit{v\textsubscript{4}}, \textit{b}, \textit{v\textsubscript{1}}) in the same state (i.e., \textit{q\textsubscript{2}} in Fig.~\ref{fig:an Example of RPQ Evaluation}), the subsequent traversals overlap with an earlier one. That is, the vertices that are included in (\textit{d}$\cdot$(\textit{b$\cdot$c})\textsuperscript{+}$\cdot$\textit{c})\textsubscript{\textit{G}} by paths from \textit{v\textsubscript{7}} found through subsequent traversals are duplicated. Therefore, if the end vertex of the path has already been visited in the same state from the start vertex of the path, the traversal is terminated to avoid duplication.
\end{example}

\begin{figure}[t]
\centering
\includegraphics[scale= 0.88]{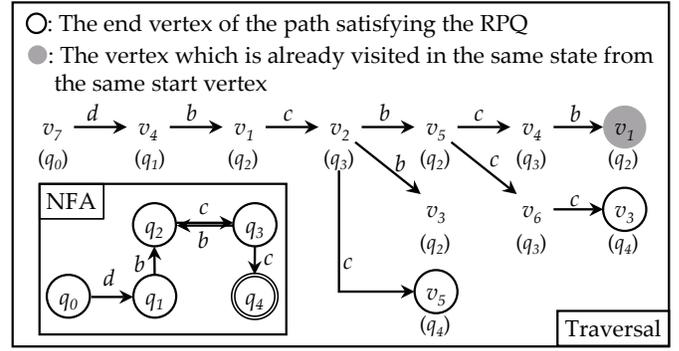}
\caption{An example of RPQ evaluation.}
\label{fig:an Example of RPQ Evaluation}
\end{figure}
 
As we can see in Example 2, evaluation of RPQ \textit{R} on \textit{G} is a complex operation since it requires both graph traversal and pattern matching~\cite{Gra03a}. In particular, evaluation of an RPQ with Kleene closures is more expensive and the result is larger than those of RPQs without them because the former can have long paths of varied lengths that satisfy the RPQ~\cite{Gra03b}. Time and space complexities of the naive RPQ evaluation depends on those of Kleene closure evaluation, which are O($\mid$\textit{V}$\mid$$\times$$\mid$\textit{E}$\mid$) and O($\mid$\textit{V}$\mid$\textsuperscript{\textit{2}}), respectively. Therefore, optimization of the evaluation of an RPQ with Kleene closures is a very important issue. 

\subsection{Evaluation of Multiple RPQs}
\label{subsec:Evaluation Method of Multiple RPQs}
A naive method of evaluating multiple RPQs is to evaluate them individually. However, if there is a common sub-query among RPQs, the sub-query is evaluated repeatedly. To avoid the problem, Abul-Basher's method~\cite{Abu17} shares the evaluation result of the common sub-query among the RPQs. We call the method \textit{FullSharing} and use it as a baseline solution in the experiment. However, if a common sub-query is a Kleene closure, the one-time evaluation of the common sub-query itself is expensive and the size of the result could be large as mentioned in Section~\ref{subsec:Regular Path Query(RPQ)}. In addition, \textit{useless} and \textit{redundant operations} can occur when each RPQ is evaluated using the result of the common sub-query. We discuss the details of these problems and propose solutions in Section~\ref{sec:Multiple Regular Path Queries(MRPQs) Evaluation Method}.

\section{RPQ-based graph reduction}
\label{sec:Intermediate Representation}
In this section, we present a novel concept of \textit{RPQ-based graph reduction}. Fig.~\ref{fig:an overview of the graph reduction} shows an overview of RPQ-based graph reduction. The first-level reduction (\textit{G} $\rightarrow$ \textit{G\textsubscript{R}} of Section~\ref{subsubsec:Edge-Level Reduced Graph}) reduces a graph \textit{G} at the edge level for RPQ \textit{R}, which maps the paths satisfying \textit{R} on \textit{G} to edges on \textit{G\textsubscript{R}}. The second-level reduction (\textit{G\textsubscript{R}} $\rightarrow$ \textit{\textoverline{G\textsubscript{R}}} of Section~\ref{subsubsec:Vertex-Level Reduced Graph}) reduces \textit{G\textsubscript{R}} at the vertex level, which maps each SCC of \textit{G\textsubscript{R}} to a vertex of \textit{\textoverline{G\textsubscript{R}}}. Based on these graph reductions, we show that \textit{R\textsuperscript{+}} can be evaluated as the transitive closure of the edge-level reduced graph \textit{G\textsubscript{R}} (Lemma~\ref{lemma:RstarG equals RTC of GR}), and further, can be calculated as the transitive closure of the two-level reduced graph \textit{\textoverline{G\textsubscript{R}}} and the Cartesian product of vertices of SCCs mapped to vertices of \textit{\textoverline{G\textsubscript{R}}} (Lemma~\ref{lemma:RTC of GR equals Cartesian product of RTC of GRbar} and Theorem~\ref{theorem:R+G equals Cartesian product of RTC of GRbar}). We also propose the transitive closure of \textit{\textoverline{G\textsubscript{R}}} as a lightweight structure for efficiently sharing the result of \textit{R\textsuperscript{+}}, which we call a reduced transitive closure (RTC of Section~\ref{subsec:Intermediate Representation of RstarG}).

\begin{figure}[b]
\centering
\includegraphics[scale=0.82]{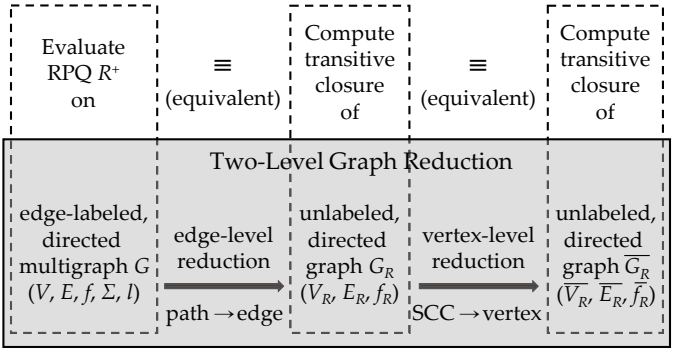}
\caption{An overview of RPQ-based graph reduction.}
\label{fig:an overview of the graph reduction}
\end{figure}

TABLE~\ref{tab:notations for GR} summarizes the notation related to \textit{G\textsubscript{R}} and \textit{\textoverline{G\textsubscript{R}}}. Vertex, edge, and path notations of \textit{G\textsubscript{R}} are the same as those of \textit{G} in TABLE~\ref{tab:notations for G} except that we use \textit{e\textsubscript{R}} and \textit{p\textsubscript{R}} instead of \textit{e} and \textit{p} to distinguish \textit{G\textsubscript{R}} and \textit{G}. Each SCC of \textit{G\textsubscript{R}} has a unique ID (SID). An SCC with SID \textit{i} is denoted by \textit{s\textsubscript{i}}. If it is not ambiguous, the set of vertices in the SCC with SID \textit{i} is also denoted by \textit{s\textsubscript{i}}. Each vertex of \textit{\textoverline{G\textsubscript{R}}} has the same unique ID as that of the SCC of \textit{G\textsubscript{R}} mapped to it. A vertex with VID \textit{i} is denoted by \textit{\textoverline{v\textsubscript{i}}}. An edge from \textit{\textoverline{v\textsubscript{s}}} to \textit{\textoverline{v\textsubscript{d}}} is denoted by \textit{\textoverline{e\textsubscript{R}}}(\textit{\textoverline{v\textsubscript{s}}}, \textit{\textoverline{v\textsubscript{d}}}).

\subsection{Edge-Level Graph Reduction (G $\rightarrow$ \textit{G\textsubscript{R}})}
\label{subsubsec:Edge-Level Reduced Graph}
The edge-level reduction maps all the paths satisfying \textit{R} between each pair of vertices of \textit{G} to one edge of \textit{G\textsubscript{R}}. Graph \textit{G\textsubscript{R}} is an unlabeled, directed graph that reduces \textit{G} at the edge level for \textit{R}. \textit{G\textsubscript{R}} is defined as a 3-tuple (\textit{V\textsubscript{R}}, \textit{E\textsubscript{R}}, \textit{f\textsubscript{R}}). \textit{V\textsubscript{R}} is a set of vertices: \{\textit{v\textsubscript{i}} $\mid$ ($\exists$\textit{v\textsubscript{j}})((\textit{v\textsubscript{i}}, \textit{v\textsubscript{j}}) $\in$ \textit{E\textsubscript{R}} $\vee$ (\textit{v\textsubscript{j}}, \textit{v\textsubscript{i}}) $\in$ \textit{E\textsubscript{R}})\}. \textit{E\textsubscript{R}} is a set of edges: \{\textit{e\textsubscript{R}}(\textit{v\textsubscript{i}}, \textit{v\textsubscript{j}}) $\mid$ there exists \textit{p}(\textit{v\textsubscript{i}}, \textit{v\textsubscript{j}}) satisfying \textit{R} on \textit{G}\}. \textit{f\textsubscript{R}}: \textit{E\textsubscript{R}} $\rightarrow$ \textit{V\textsubscript{R}}$\times$\textit{V\textsubscript{R}} is a function that maps each edge to an ordered pair of vertices connected by the edge. Reduction of \textit{G} to \textit{G\textsubscript{R}} through the edge-level reduction has the following aspects.

\begin{table}[t]
    \caption{The notation related to \textit{G\textsubscript{R}} and \textit{\textoverline{G\textsubscript{R}}}.}
    \label{tab:notations for GR}
    \centering{}
    \begin{tabular}{|c|c||p{5.8cm}|}
        \hline
        \multicolumn{2}{|c||}{Notation} & \multicolumn{1}{c|}{Description}\\
        \hline
        \hline
        \multirow{6}{*}{\textit{G\textsubscript{R}}}& \textit{v\textsubscript{i}} & A vertex of \textit{G\textsubscript{R}} with VID $i$\\
        \cline{2-3}
        & {\textit{e\textsubscript{R}}(\textit{v\textsubscript{s}}, \textit{v\textsubscript{d}})} & An edge of \textit{G\textsubscript{R}} from \textit{v\textsubscript{s}} to \textit{v\textsubscript{d}}\\
        \cline{2-3}
        & \multirow{2}{*}{\textit{p\textsubscript{R}}(\textit{v\textsubscript{s}}, \textit{v\textsubscript{d}})} & A path of \textit{G\textsubscript{R}} from \textit{v\textsubscript{s}} to \textit{v\textsubscript{d}}, which is a sequence of \textit{e\textsubscript{R}}(\textit{v\textsubscript{s}}, \textit{v\textsubscript{j\textsubscript{1}}}), $\cdots$, \textit{e\textsubscript{R}}(\textit{v\textsubscript{j\textsubscript{n-1}}},\textit{v\textsubscript{d}})\\
        \cline{2-3}
        & \multirow{2}{*}{\textit{s\textsubscript{i}}} &An SCC of \textit{G\textsubscript{R}} with SID $i$ or the set of vertices in the SCC\\
        \hline
        \multirow{3}{*}{\textit{\textoverline{G\textsubscript{R}}}} & \multirow{2}{*}{\textit{\textoverline{v\textsubscript{i}}}}& The vertex of \textit{\textoverline{G\textsubscript{R}}} to which \textit{s\textsubscript{i}} of \textit{G\textsubscript{R}} is mapped (VID is \textit{i}) \\
        \cline{2-3}
        & \textit{\textoverline{e\textsubscript{R}}}(\textit{\textoverline{v\textsubscript{s}}}, \textit{\textoverline{v\textsubscript{d}}}) & An edge of \textit{\textoverline{G\textsubscript{R}}} from \textit{\textoverline{v\textsubscript{s}}} to \textit{\textoverline{v\textsubscript{d}}}\\
        \hline
    \end{tabular}
\end{table}

\begin{itemize}
    \item Vertices and edges that do not belong to a path satisfying \textit{R} are excluded. To obtain \textit{R\textsuperscript{+}\hspace{-0.12cm}\textsubscript{G}} we only need the paths satisfying \textit{R}. Thus, we don't need to consider vertices and edges not belonging to the paths satisfying \textit{R} on \textit{G}.
    \item Labeled graph $\rightarrow$ unlabeled graph. Since only the paths satisfying \textit{R} are mapped to edges, labels of edges are no longer needed. (i.e., every label is \textit{R}.)
    \item Multigraph $\rightarrow$ simple graph. Since labels of edges are excluded, paths in the same direction between each vertex pair are mapped to one edge.
\end{itemize}

\begin{example}
Fig.~\ref{fig:an example of edge level graph reduction} shows an example of edge-level graph reduction. Here, the graph \textit{G} in Fig.~\ref{fig:an example of graph} is reduced at the edge level for \textit{b$\cdot$c}. In \textit{G}, the paths satisfying \textit{b$\cdot$c} are \textit{p}(\textit{v\textsubscript{2}}, \textit{v\textsubscript{4}}), \textit{p}(\textit{v\textsubscript{2}}, \textit{v\textsubscript{6}}), \textit{p}(\textit{v\textsubscript{3}}, \textit{v\textsubscript{5}}), \textit{p}(\textit{v\textsubscript{4}}, \textit{v\textsubscript{2}}), and \textit{p}(\textit{v\textsubscript{5}}, \textit{v\textsubscript{3}}). Since these paths are mapped to edges of \textit{G\textsubscript{b$\cdot$c}}, \textit{E\textsubscript{b$\cdot$c}} becomes \{\textit{e\textsubscript{b$\cdot$c}}(\textit{v\textsubscript{2}}, \textit{v\textsubscript{4}}), \textit{e\textsubscript{b$\cdot$c}}(\textit{v\textsubscript{2}}, \textit{v\textsubscript{6}}), \textit{e\textsubscript{b$\cdot$c}}(\textit{v\textsubscript{3}}, \textit{v\textsubscript{5}}), \textit{e\textsubscript{b$\cdot$c}}(\textit{v\textsubscript{4}}, \textit{v\textsubscript{2}}), \textit{e\textsubscript{b$\cdot$c}}(\textit{v\textsubscript{5}}, \textit{v\textsubscript{3}})\}.
\end{example}

\begin{figure}[b]
\centering
\includegraphics[scale=0.88]{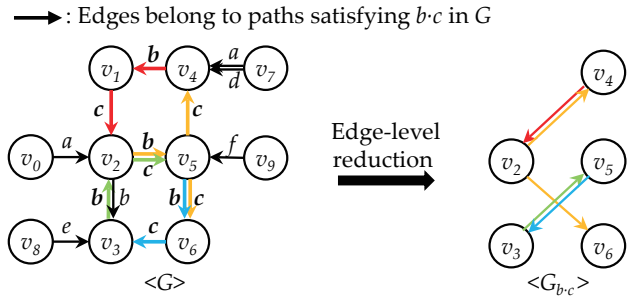}
\caption{An example of edge-level reduction.}
\label{fig:an example of edge level graph reduction}
\end{figure}

We show that the problem of evaluating RPQ \textit{R\textsuperscript{+}} can be reduced to a problem of computing the transitive closure of the edge-level reduced graph \textit{G\textsubscript{R}} (denoted by TC(\textit{G\textsubscript{R}})) in Lemma~\ref{lemma:RstarG equals RTC of GR}.

\begin{lemma}
\label{lemma:RstarG equals RTC of GR}
\textit{R\textsuperscript{+}\hspace{-0.12cm}\textsubscript{G}} is equivalent to TC(\textit{G\textsubscript{R}}).
\end{lemma}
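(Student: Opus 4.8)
The plan is to establish the set equality by proving both inclusions $\mathrm{TC}(G_R) \subseteq R^+_G$ and $R^+_G \subseteq \mathrm{TC}(G_R)$, in each case translating a path of one graph into a path of the other. First I would record the two facts that drive everything: by the definition of $E_R$, an edge $e_R(v_a,v_b)$ exists in $G_R$ exactly when $G$ contains a path from $v_a$ to $v_b$ satisfying $R$; and, by the definition of Kleene plus, a string matches $R^+$ precisely when it is a concatenation $w_1 w_2 \cdots w_n$ of $n \ge 1$ strings, each of which matches $R$. I also recall that $\mathrm{TC}(G_R)$ consists of the pairs $(v_i,v_j)$ for which a path of length at least one from $v_i$ to $v_j$ exists in $G_R$, and that, by Definition~\ref{def:RPQ result}, $R^+_G$ consists of the pairs $(v_i,v_j)$ for which a path $p(v_i,v_j)$ whose path label matches $R^+$ exists in $G$.

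For the inclusion $\mathrm{TC}(G_R) \subseteq R^+_G$ (the ``concatenation'' direction) I would take $(v_i,v_j) \in \mathrm{TC}(G_R)$ with a witnessing path $e_R(v_i,v_{k_1}), e_R(v_{k_1},v_{k_2}), \ldots, e_R(v_{k_{n-1}},v_j)$ in $G_R$, use the definition of $E_R$ to replace the $m$-th edge by a path $p_m$ in $G$ from $v_{k_{m-1}}$ to $v_{k_m}$ satisfying $R$, and concatenate $p_1, \ldots, p_n$ (legal because consecutive endpoints agree) into a single path $p(v_i,v_j)$ of $G$; its path label is then a concatenation of $n \ge 1$ strings each matching $R$, hence matches $R^+$, so $(v_i,v_j) \in R^+_G$. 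For the reverse inclusion (the ``decomposition'' direction) I would take $(v_i,v_j) \in R^+_G$ with a path $p(v_i,v_j)$ in $G$ whose path label $w$ matches $R^+$, factor $w = w_1 \cdots w_n$ with each $w_m$ matching $R$, and cut $p$ at the vertices that delimit these factors along its edge sequence, obtaining $p = p_1 \cdots p_n$ where $p_m$ is a path of $G$ from $u_{m-1}$ to $u_m$ with path label $w_m$ (and $u_0 = v_i$, $u_n = v_j$). Since each $p_m$ satisfies $R$, the definition of $G_R$ gives $u_{m-1}, u_m \in V_R$ and $e_R(u_{m-1},u_m) \in E_R$; chaining these edges yields a path of length $n \ge 1$ in $G_R$ from $v_i$ to $v_j$, so $(v_i,v_j) \in \mathrm{TC}(G_R)$. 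Combining the two inclusions gives $R^+_G = \mathrm{TC}(G_R)$.

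The step I expect to be the main obstacle is the decomposition direction: I must argue that the string-level factorization $w = w_1 \cdots w_n$ genuinely induces a decomposition of the \emph{path} $p$ into sub-paths of $G$ --- that the cut points are actual vertices visited by $p$ and that each $p_m$ is a bona fide (possibly cyclic) path legitimately witnessing an edge of $G_R$. A secondary point worth an explicit remark is that $G_R$ discards edge labels and collapses all $R$-paths between a fixed ordered pair of vertices into one edge; this loses nothing for the statement, since the transitive closure depends only on reachability and not on how many paths realize each edge. Edge cases such as an $R$-path that is itself a cycle --- producing a self-loop in $G_R$ and the corresponding reflexive pair in $R^+_G$, which is consistent with $R^+$ rather than $R^*$ --- are handled without change by the same argument.
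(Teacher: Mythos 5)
Your proposal is correct and follows essentially the same route as the paper: both directions are proved by translating between a length-$n$ path in $G_R$ and a concatenation of $n$ subpaths of $G$ each satisfying $R$, using the definition of $E_R$ and the characterization of $R^+$ as $R^n$ for some $n \ge 1$. The decomposition step you flag as the main obstacle is exactly what the paper asserts (without further elaboration) when it writes that a path satisfying $R^n$ "can be represented as a sequence of $n$ paths each satisfying $R$," so your added care there only makes the same argument more explicit.
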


\begin{proof}
We prove \textit{R\textsuperscript{+}\hspace{-0.12cm}\textsubscript{G}} = TC(\textit{G\textsubscript{R}}) by showing \textit{R\textsuperscript{+}\hspace{-0.12cm}\textsubscript{G}} $\subseteq$ TC(\textit{G\textsubscript{R}}) and \textit{R\textsuperscript{+}\hspace{-0.12cm}\textsubscript{G}} $\supseteq$ TC(\textit{G\textsubscript{R}}), respectively.

\begin{itemize}
    \item \textit{R\textsuperscript{+}\hspace{-0.12cm}\textsubscript{G}} $\subseteq$ TC(\textit{G\textsubscript{R}})\\
    Suppose (\textit{v\textsubscript{i}}, \textit{v\textsubscript{j}}) $\in$ \textit{R\textsuperscript{+}\hspace{-0.12cm}\textsubscript{G}}. Then, there exists a path \textit{p}(\textit{v\textsubscript{i}}, \textit{v\textsubscript{j}}) satisfying \textit{R\textsuperscript{n}} (concatenation of \textit{R} \textit{n} times, \textit{n} $\ge$ 1) on \textit{G}. \textit{p}(\textit{v\textsubscript{i}}, \textit{v\textsubscript{j}}) can be represented as a sequence of \textit{n} paths \textit{p}(\textit{v\textsubscript{i}}, \textit{v\textsubscript{k\textsubscript{1}}}), \textit{p}(\textit{v\textsubscript{k\textsubscript{1}}}, \textit{v\textsubscript{k\textsubscript{2}}}), $\cdots$, \textit{p}(\textit{v\textsubscript{k\textsubscript{n-1}}}, \textit{v\textsubscript{j}}) each satisfying \textit{R}. By edge-level reduction, all the paths satisfying \textit{R} between each pair of vertices maps to one edge in \textit{G\textsubscript{R}}. That is, because there exist \textit{e\textsubscript{R}}(\textit{v\textsubscript{i}}, \textit{v\textsubscript{k\textsubscript{1}}}), \textit{e\textsubscript{R}}(\textit{v\textsubscript{k\textsubscript{1}}}, \textit{v\textsubscript{k\textsubscript{2}}}), $\cdots$, \textit{e\textsubscript{R}}(\textit{v\textsubscript{k\textsubscript{n-1}}}, \textit{v\textsubscript{j}}) in \textit{G\textsubscript{R}}, there exist \textit{p\textsubscript{R}}(\textit{v\textsubscript{i}}, \textit{v\textsubscript{j}}) in \textit{G\textsubscript{R}}. Hence, (\textit{v\textsubscript{i}}, \textit{v\textsubscript{j}}) $\in$ TC(\textit{G\textsubscript{R}}) and \textit{R\textsuperscript{+}\hspace{-0.12cm}\textsubscript{G}} $\subseteq$ TC(\textit{G\textsubscript{R}}).

    \item \textit{R\textsuperscript{+}\hspace{-0.12cm}\textsubscript{G}} $\supseteq$ TC(\textit{G\textsubscript{R}})\\
    Suppose (\textit{v\textsubscript{i}}, \textit{v\textsubscript{j}}) $\in$ TC(\textit{G\textsubscript{R}}). Then, there exists a path \textit{p\textsubscript{R}}(\textit{v\textsubscript{i}}, \textit{v\textsubscript{j}}) of length \textit{n}(\textit{n} $\ge$ 1) on \textit{G\textsubscript{R}}. \textit{p\textsubscript{R}}(\textit{v\textsubscript{i}}, \textit{v\textsubscript{j}}) can be represented as a sequence of \textit{n} edges \textit{e\textsubscript{R}}(\textit{v\textsubscript{i}}, \textit{v\textsubscript{k\textsubscript{1}}}), \textit{e\textsubscript{R}}(\textit{v\textsubscript{k\textsubscript{1}}}, \textit{v\textsubscript{k\textsubscript{2}}}), $\cdots$, \textit{e\textsubscript{R}}(\textit{v\textsubscript{k\textsubscript{n-1}}}, \textit{v\textsubscript{j}}). By the definition of \textit{G\textsubscript{R}}, there exist paths \textit{p}(\textit{v\textsubscript{i}}, \textit{v\textsubscript{k\textsubscript{1}}}), \textit{p}(\textit{v\textsubscript{k\textsubscript{1}}}, \textit{v\textsubscript{k\textsubscript{2}}}), $\cdots$, \textit{p}(\textit{v\textsubscript{k\textsubscript{n-1}}}, \textit{v\textsubscript{j}}) each satisfying \textit{R} in \textit{G}. That is, there exists a path \textit{p}(\textit{v\textsubscript{i}}, \textit{v\textsubscript{j}}) satisfying \textit{R\textsuperscript{n}}. Hence, (\textit{v\textsubscript{i}}, \textit{v\textsubscript{j}}) $\in$ \textit{R\textsuperscript{+}\hspace{-0.12cm}\textsubscript{G}} and \textit{R\textsuperscript{+}\hspace{-0.12cm}\textsubscript{G}} $\supseteq$ TC(\textit{G\textsubscript{R}}).
\end{itemize}
\vspace{-0.2cm}
\end{proof}

\begin{example}
In Fig.~\ref{fig:an example of edge level graph reduction}, (\textit{b$\cdot$c})\textit{\textsuperscript{+}\hspace{-0.12cm}\textsubscript{G}} is equivalent to TC(\textit{G}\textsubscript{\textit{b$\cdot$c}}):\\\{(\textit{v\textsubscript{2}}, \textit{v\textsubscript{2}}), (\textit{v\textsubscript{2}}, \textit{v\textsubscript{4}}), (\textit{v\textsubscript{2}}, \textit{v\textsubscript{6}}), (\textit{v\textsubscript{3}}, \textit{v\textsubscript{3}}), (\textit{v\textsubscript{3}}, \textit{v\textsubscript{5}}), (\textit{v\textsubscript{4}}, \textit{v\textsubscript{2}}), (\textit{v\textsubscript{4}}, \textit{v\textsubscript{4}}), (\textit{v\textsubscript{4}}, \textit{v\textsubscript{6}}), (\textit{v\textsubscript{5}}, \textit{v\textsubscript{3}}), (\textit{v\textsubscript{5}}, \textit{v\textsubscript{5}})\}.
\end{example}

\subsection{Vertex-Level Graph Reduction (\textit{G\textsubscript{R}} $\rightarrow$ \textit{\textoverline{G\textsubscript{R}}})}
\label{subsubsec:Vertex-Level Reduced Graph}
The vertex-level reduction maps each SCC of \textit{G\textsubscript{R}} to one vertex of \textit{\textoverline{G\textsubscript{R}}}. The graph \textit{\textoverline{G\textsubscript{R}}} is an unlabeled, directed graph. \textit{\textoverline{G\textsubscript{R}}} is defined as a 3-tuple (\textit{\textoverline{V\textsubscript{R}}}, \textit{\textoverline{E\textsubscript{R}}}, \textit{\textoverline{f\textsubscript{R}}}). \textit{\textoverline{V\textsubscript{R}}} is a set of vertices: \{\textit{\textoverline{v\textsubscript{i}}} $\mid$ ($\exists$\textit{s\textsubscript{j}})(\textit{s\textsubscript{j}} is an SCC of \textit{G\textsubscript{R}} $\wedge$ \textit{i} = \textit{j})\}. \textit{\textoverline{E\textsubscript{R}}} is a set of edges: \{\textit{\textoverline{e\textsubscript{R}}}(\textit{\textoverline{v\textsubscript{i}}}, \textit{\textoverline{v\textsubscript{j}}}) $\mid$ ($\exists$ \textit{s\textsubscript{k}})($\exists$ \textit{s\textsubscript{l}})($\exists$ \textit{v\textsubscript{m}})($\exists$ \textit{v\textsubscript{n}})(\textit{s\textsubscript{k}} and \textit{s\textsubscript{l}} are SCCs of \textit{G\textsubscript{R}} $\wedge$ \textit{v\textsubscript{m}} $\in$ \textit{s\textsubscript{k}} $\wedge$ \textit{v\textsubscript{n}} $\in$ \textit{s\textsubscript{l}} $\wedge$ \textit{e\textsubscript{R}}(\textit{v\textsubscript{m}}, \textit{v\textsubscript{n}}) $\in$ \textit{E\textsubscript{R}} $\wedge$ \textit{i} = \textit{k} $\wedge$ \textit{j} = \textit{l})\}. \textit{\textoverline{f\textsubscript{R}}}: \textit{\textoverline{E\textsubscript{R}}} $\rightarrow$ \textit{\textoverline{V\textsubscript{R}}}$\times$\textit{\textoverline{V\textsubscript{R}}} is a function that maps each edge to an ordered pair of vertices connected by the edge. The vertex-level reduction has the following characteristics:

\begin{itemize}
    \item Edges between any pair of vertices in the same SCC of \textit{G\textsubscript{R}} are mapped to one self-loop edge in \textit{\textoverline{G\textsubscript{R}}}.
    \item Edges with the same direction between any pair of vertices in two different SCCs of \textit{G\textsubscript{R}} are mapped to one edge in \textit{\textoverline{G\textsubscript{R}}}.
\end{itemize}

\begin{figure}[b]
\centering
\includegraphics[scale=0.88]{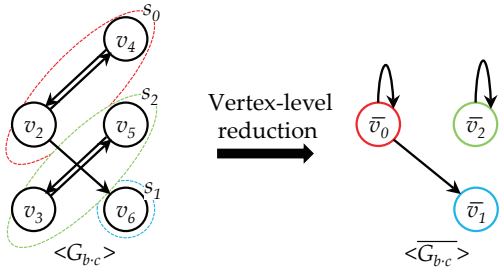}
\caption{An example of vertex-level reduction.}
\label{fig:an example of vertex level graph reduction}
\end{figure}

\begin{example}
Fig.~\ref{fig:an example of vertex level graph reduction} shows an example of vertex-level graph reduction. Here, \textit{G}\textsubscript{\textit{b$\cdot$c}} in Fig.~\ref{fig:an example of edge level graph reduction} is reduced to \textit{\textoverline{G\textsubscript{b$\cdot$c}}} at the vertex level. There are three SCCs in \textit{G}\textsubscript{\textit{b$\cdot$c}}: \textit{s\textsubscript{0}}, \textit{s\textsubscript{1}}, \textit{s\textsubscript{2}}. Since each of these SCCs is mapped to one vertex of \textit{\textoverline{G\textsubscript{b$\cdot$c}}}, \textit{\textoverline{V\textsubscript{b$\cdot$c}}} becomes \{\textit{\textoverline{v\textsubscript{0}}}, \textit{\textoverline{v\textsubscript{1}}}, \textit{\textoverline{v\textsubscript{2}}}\}. There exists an edge from \textit{v\textsubscript{2}} in \textit{s\textsubscript{0}} to \textit{v\textsubscript{6}} in \textit{s\textsubscript{1}}; these vertices belong to different SCCs. This edge of \textit{G\textsubscript{R}} is mapped to the edge \textit{\textoverline{e\textsubscript{b$\cdot$c}}}(\textit{\textoverline{v\textsubscript{0}}}, \textit{\textoverline{v\textsubscript{1}}}) of \textit{\textoverline{G\textsubscript{R}}}. The SCC \textit{s\textsubscript{0}} (\textit{s\textsubscript{2}}) has edges between the vertices that belong to \textit{s\textsubscript{0}} (\textit{s\textsubscript{2}}) itself. These edges are mapped to the edge \textit{\textoverline{e\textsubscript{b$\cdot$c}}}(\textit{\textoverline{v\textsubscript{0}}}, \textit{\textoverline{v\textsubscript{0}}}) (\textit{\textoverline{e\textsubscript{b$\cdot$c}}}(\textit{\textoverline{v\textsubscript{2}}}, \textit{\textoverline{v\textsubscript{2}}})) in \textit{\textoverline{G\textsubscript{R}}} constituting an self-loop edge. Thus, \textit{\textoverline{E\textsubscript{b$\cdot$c}}} is \{\textit{\textoverline{e\textsubscript{b$\cdot$c}}}(\textit{\textoverline{v\textsubscript{0}}}, \textit{\textoverline{v\textsubscript{0}}}), \textit{\textoverline{e\textsubscript{b$\cdot$c}}}(\textit{\textoverline{v\textsubscript{0}}}, \textit{\textoverline{v\textsubscript{1}}}), \textit{\textoverline{e\textsubscript{b$\cdot$c}}}(\textit{\textoverline{v\textsubscript{2}}}, \textit{\textoverline{v\textsubscript{2}}})\}.
\end{example}

Using Lemma~\ref{lemma:RTC of GR equals Cartesian product of RTC of GRbar} below, we can efficiently compute the TC(\textit{G\textsubscript{R}}) (and accordingly, \textit{R\textsuperscript{+}\hspace{-0.12cm}\textsubscript{G}}) by computing the transitive closure of \textit{\textoverline{G\textsubscript{R}}} (denoted by TC(\textit{\textoverline{G\textsubscript{R}}})). Purdom~\cite{Pur70} provided the following Lemma~\ref{lemma: the characteristic of SCC}.

\begin{lemma}[\cite{Pur70}]
\label{lemma: the characteristic of SCC}
For any pair of vertices \textit{v\textsubscript{i}} and \textit{v\textsubscript{j}}, if there is a path from \textit{v\textsubscript{i}} to \textit{v\textsubscript{j}}, there are paths from each vertex belonging to the same SCC as that of \textit{v\textsubscript{i}} to all vertices belonging to the same SCC as that of \textit{v\textsubscript{j}}.
\end{lemma}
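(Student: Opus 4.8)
The plan is to derive the statement directly from the definition of a strongly connected component together with the fact that paths compose by concatenation. Let $s_k$ be the SCC containing $v_i$ and $s_l$ the SCC containing $v_j$, and fix arbitrary vertices $v_m \in s_k$ and $v_n \in s_l$; the goal is to produce a path from $v_m$ to $v_n$.

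First I would invoke the defining property of an SCC: all vertices in a common SCC are pairwise mutually reachable. Hence, since $v_m$ and $v_i$ both lie in $s_k$, there is a path from $v_m$ to $v_i$, and since $v_j$ and $v_n$ both lie in $s_l$, there is a path from $v_j$ to $v_n$. By hypothesis there is also a path from $v_i$ to $v_j$.

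Next I would concatenate these three paths in order --- from $v_m$ to $v_i$, then from $v_i$ to $v_j$, then from $v_j$ to $v_n$ --- to obtain a walk from $v_m$ to $v_n$; deleting any repeated vertices turns this walk into a (simple) path from $v_m$ to $v_n$. Since $v_m \in s_k$ and $v_n \in s_l$ were arbitrary, this establishes the claim. The degenerate cases need no separate treatment: if $v_m = v_i$ or $v_n = v_j$ the corresponding segment is the trivial zero-length path, and if $s_k = s_l$ the argument is unchanged (reachability, which is all that is asserted, still follows).

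There is essentially no real obstacle here: the lemma is a one-line consequence of ``SCC $=$ mutual reachability'' plus transitivity of reachability, which is precisely why it is cited from Purdom rather than proved afresh, and why it can be used downstream to relate TC(\textit{G\textsubscript{R}}) to TC(\textit{\textoverline{G\textsubscript{R}}}). The only point requiring a moment's care is the distinction between a walk and a simple path, and that is harmless because we only need the \emph{existence} of a path, for which exhibiting any walk suffices.
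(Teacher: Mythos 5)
Your proof is correct: mutual reachability within each SCC plus the given path from \textit{v\textsubscript{i}} to \textit{v\textsubscript{j}}, concatenated and (if desired) simplified to a path, is exactly the argument this fact rests on. The paper itself gives no proof---it cites the lemma from Purdom---so there is nothing to diverge from; your write-up supplies the standard justification, and your handling of the degenerate cases and the walk-versus-path point is fine.
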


We now introduce the following Lemma~\ref{lemma:RTC of GR equals Cartesian product of RTC of GRbar}.

\begin{lemma}
\label{lemma:RTC of GR equals Cartesian product of RTC of GRbar}
TC(\textit{G\textsubscript{R}}) = \{(\textit{v\textsubscript{i}}, \textit{v\textsubscript{j}}) $\mid$ (\textit{\textoverline{v\textsubscript{k}}}, \textit{\textoverline{v\textsubscript{l}}}) $\in$ TC(\textit{\textoverline{G\textsubscript{R}}}) $\wedge$ (\textit{v\textsubscript{i}}, \textit{v\textsubscript{j}}) $\in$ \textit{s\textsubscript{k}}$\times$\textit{s\textsubscript{l}}\}, where \textit{\textoverline{v\textsubscript{k}}}, \textit{\textoverline{v\textsubscript{l}}} are vertices of \textit{\textoverline{G\textsubscript{R}}} to which \textit{s\textsubscript{k}} and \textit{s\textsubscript{l}} of \textit{G\textsubscript{R}} are mapped, respectively.
\end{lemma}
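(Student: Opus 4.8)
The plan is to prove set equality by double inclusion, with the key idea being that an SCC-contraction preserves reachability in both directions, together with the observation that within a single SCC every ordered pair of vertices is mutually reachable. I will denote by $\varphi$ the map sending a vertex $v_i$ of $G_R$ to the vertex $\overline{v_k}$ of $\overline{G_R}$ with $v_i \in s_k$; this is well defined since the SCCs partition $V_R$.

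First I would prove the $\subseteq$ direction. Suppose $(v_i, v_j) \in \mathrm{TC}(G_R)$, so there is a path $p_R(v_i, v_j)$ in $G_R$ of length $\ge 1$. Let $s_k$ and $s_l$ be the SCCs containing $v_i$ and $v_j$ respectively. If $k = l$, then $v_i$ and $v_j$ lie in the same SCC $s_k$, and by the definition of $\overline{E_R}$ the edges inside $s_k$ map to the self-loop $\overline{e_R}(\overline{v_k}, \overline{v_k})$; provided $s_k$ actually contains an edge (which it does, since $v_i$ sits on a path of $G_R$ and the path either stays in $s_k$ or leaves and must re-enter, forcing an internal edge — more cleanly, any SCC appearing on a nontrivial path either has $\ge 2$ vertices and hence an internal cycle, or is a single vertex with a self-loop coming from the path), we get $(\overline{v_k}, \overline{v_k}) \in \mathrm{TC}(\overline{G_R})$, and $(v_i, v_j) \in s_k \times s_k$ as required. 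If $k \ne l$, walk along $p_R(v_i, v_j)$ and record the sequence of SCCs visited; contracting consecutive repeats yields a walk $\overline{v_k} = \overline{v_{k_0}}, \overline{v_{k_1}}, \ldots, \overline{v_{k_m}} = \overline{v_l}$ in which each step crosses an edge of $G_R$ between distinct SCCs, hence is an edge of $\overline{G_R}$ by definition of $\overline{E_R}$. This is a path of length $\ge 1$ in $\overline{G_R}$, so $(\overline{v_k}, \overline{v_l}) \in \mathrm{TC}(\overline{G_R})$, and trivially $(v_i, v_j) \in s_k \times s_l$.

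Next the $\supseteq$ direction. Suppose $(\overline{v_k}, \overline{v_l}) \in \mathrm{TC}(\overline{G_R})$ and $(v_i, v_j) \in s_k \times s_l$. There is a path $\overline{v_k} = \overline{v_{k_0}}, \ldots, \overline{v_{k_m}} = \overline{v_l}$ in $\overline{G_R}$ with $m \ge 1$. Each edge $\overline{e_R}(\overline{v_{k_{t}}}, \overline{v_{k_{t+1}}})$ comes, by the definition of $\overline{E_R}$, from some edge $e_R(v_{m_t}, v_{n_t}) \in E_R$ with $v_{m_t} \in s_{k_t}$, $v_{n_t} \in s_{k_{t+1}}$. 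Now I invoke Lemma~\ref{lemma: the characteristic of SCC}: within each SCC $s_{k_t}$ there is a path from the ``entry'' vertex $v_{n_{t-1}}$ (or from $v_i$ when $t=0$) to the ``exit'' vertex $v_{m_t}$ (or to $v_j$ when $t=m$), because all these vertices lie in one SCC and are hence mutually reachable. Concatenating, in order, the in-SCC path from $v_i$ to $v_{m_0}$, the cross edge to $v_{n_0}$, the in-SCC path from $v_{n_0}$ to $v_{m_1}$, and so on, ending with the in-SCC path from $v_{n_{m-1}}$ to $v_j$, produces a path from $v_i$ to $v_j$ in $G_R$ of length $\ge 1$ (the cross edges alone guarantee positive length). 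Hence $(v_i, v_j) \in \mathrm{TC}(G_R)$.

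The main obstacle is the self-loop / same-SCC bookkeeping in the $\subseteq$ direction: one must be careful that when $v_i$ and $v_j$ are in the same SCC there genuinely is a self-loop $\overline{e_R}(\overline{v_k},\overline{v_k})$ in $\overline{G_R}$ — equivalently, that the SCC contains at least one edge of $E_R$ — and likewise in the $\supseteq$ direction that a self-loop in $\overline{G_R}$ really does certify an internal edge, so that Lemma~\ref{lemma: the characteristic of SCC} can be applied to reach any target vertex from any source vertex inside that SCC (a single isolated vertex with no edges is its own SCC but contributes no self-loop, and indeed contributes no element to either side, so the two descriptions stay consistent). Handling the degenerate ``trivial SCC'' case explicitly, and otherwise noting that any SCC with $\ge 2$ vertices is strongly connected and therefore contains a cycle, closes this gap; the rest is the routine path-concatenation argument sketched above.
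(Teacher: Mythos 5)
Your proof is correct and takes essentially the same route as the paper's: a double-inclusion argument in which projecting a path of $G_R$ onto the SCC-contracted graph gives one inclusion, and strong connectivity within SCCs (the paper invokes Lemma~\ref{lemma: the characteristic of SCC}) lifts a path of $\overline{G_R}$ back to $G_R$ for the other. You simply make explicit the details the paper's terse proof glosses over, namely the same-SCC/self-loop case and the entry--exit concatenation across consecutive SCCs.
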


\begin{proof}
~\begin{itemize}
    \item TC(\textit{G\textsubscript{R}}) $\subseteq$ \{(\textit{v\textsubscript{i}}, \textit{v\textsubscript{j}}) $\mid$ (\textit{\textoverline{v\textsubscript{k}}}, \textit{\textoverline{v\textsubscript{l}}}) $\in$ TC(\textit{\textoverline{G\textsubscript{R}}}) $\wedge$ (\textit{v\textsubscript{i}}, \textit{v\textsubscript{j}}) $\in$ \textit{s\textsubscript{k}}$\times$\textit{s\textsubscript{l}}\}.\\
    Suppose (\textit{v\textsubscript{i}}, \textit{v\textsubscript{j}}) $\in$ TC(\textit{G\textsubscript{R}}). Then, there is a path from a vertex of \textit{\textoverline{G\textsubscript{R}}} which is mapped to the SCC containing \textit{v\textsubscript{i}} to a vertex of \textit{\textoverline{G\textsubscript{R}}} which is mapped to the SCC containing \textit{v\textsubscript{j}} by the definition of \textit{\textoverline{G\textsubscript{R}}}. Hence, (\textit{v\textsubscript{i}}, \textit{v\textsubscript{j}}) $\in$ \{(\textit{v\textsubscript{i}}, \textit{v\textsubscript{j}}) $\mid$ (\textit{\textoverline{v\textsubscript{k}}}, \textit{\textoverline{v\textsubscript{l}}}) $\in$ TC(\textit{\textoverline{G\textsubscript{R}}}) $\wedge$ (\textit{v\textsubscript{i}}, \textit{v\textsubscript{j}}) $\in$ \textit{s\textsubscript{k}}$\times$\textit{s\textsubscript{l}}\} holds.
    
    \item \{(\textit{v\textsubscript{i}}, \textit{v\textsubscript{j}}) $\mid$ (\textit{\textoverline{v\textsubscript{k}}}, \textit{\textoverline{v\textsubscript{l}}}) $\in$ TC(\textit{\textoverline{G\textsubscript{R}}}) $\wedge$ (\textit{v\textsubscript{i}}, \textit{v\textsubscript{j}}) $\in$ \textit{s\textsubscript{k}}$\times$\textit{s\textsubscript{l}}\} $\subseteq$ TC(\textit{G\textsubscript{R}}).\\
    Lemma~\ref{lemma: the characteristic of SCC} means that if there is a path from \textit{v\textsubscript{i}} in \textit{s\textsubscript{k}} to \textit{v\textsubscript{j}} in \textit{s\textsubscript{l}}, then there exist a path from each vertex in \textit{s\textsubscript{k}} to each vertex in \textit{s\textsubscript{l}}. Hence, \{(\textit{v\textsubscript{i}}, \textit{v\textsubscript{j}}) $\mid$ (\textit{\textoverline{v\textsubscript{k}}}, \textit{\textoverline{v\textsubscript{l}}}) $\in$ TC(\textit{\textoverline{G\textsubscript{R}}}) $\wedge$ (\textit{v\textsubscript{i}}, \textit{v\textsubscript{j}}) $\in$ \textit{s\textsubscript{k}}$\times$\textit{s\textsubscript{l}}\} $\subseteq$ TC(\textit{G\textsubscript{R}}) holds by Lemma~\ref{lemma: the characteristic of SCC}.
\end{itemize}
\end{proof}

The transitive closure algorithms in~\cite{Pur70} and \cite{Nuu94} are instances of implementation of Lemma~\ref{lemma:RTC of GR equals Cartesian product of RTC of GRbar} (without a formal correctness proof). Purdom~\cite{Pur70} proposed an algorithm that essentially computes the transitive closure of \textit{\textoverline{G\textsubscript{R}}}, and then, the Cartesian product \textit{s\textsubscript{k}}$\times$\textit{s\textsubscript{l}} without formally introducing the concept of the vertex-reduced graph, but instead, treating the nodes in an SCC of the original graph as an equivalent class. Nuutila~\cite{Nuu94} improved Purdom's algorithm by obtaining the transitive closure of \textit{\textoverline{G\textsubscript{R}}} and the Cartesian product \textit{s\textsubscript{k}}$\times$\textit{s\textsubscript{l}} in one step in an interleaved way. In this paper, we formalize the concept implied by the algorithm by Prudom~\cite{Pur70} with a formal definition of the vertex reduced graph and the correctness proof.

\begin{theorem}
\label{theorem:R+G equals Cartesian product of RTC of GRbar}
\textit{R\textsuperscript{+}\hspace{-0.12cm}\textsubscript{G}} = \{(\textit{v\textsubscript{i}}, \textit{v\textsubscript{j}}) $\mid$ (\textit{\textoverline{v\textsubscript{k}}}, \textit{\textoverline{v\textsubscript{l}}}) $\in$ TC(\textit{\textoverline{G\textsubscript{R}}}) $\wedge$ (\textit{v\textsubscript{i}}, \textit{v\textsubscript{j}}) $\in$ \textit{s\textsubscript{k}}$\times$\textit{s\textsubscript{l}}\} where \textit{\textoverline{v\textsubscript{k}}}, \textit{\textoverline{v\textsubscript{l}}} are vertices of \textit{\textoverline{G\textsubscript{R}}} to which \textit{s\textsubscript{k}} and \textit{s\textsubscript{l}} of \textit{G\textsubscript{R}} are mapped, respectively.
\end{theorem}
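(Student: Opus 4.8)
The plan is to obtain Theorem~\ref{theorem:R+G equals Cartesian product of RTC of GRbar} as an immediate corollary of the two equalities already established, namely Lemma~\ref{lemma:RstarG equals RTC of GR} (\,$R^+_G = \mathrm{TC}(G_R)$\,) and Lemma~\ref{lemma:RTC of GR equals Cartesian product of RTC of GRbar} (\,$\mathrm{TC}(G_R) = \{(v_i, v_j) \mid (\overline{v_k}, \overline{v_l}) \in \mathrm{TC}(\overline{G_R}) \wedge (v_i, v_j) \in s_k \times s_l\}$\,). Since set equality is transitive, composing these two identities yields $R^+_G = \{(v_i, v_j) \mid (\overline{v_k}, \overline{v_l}) \in \mathrm{TC}(\overline{G_R}) \wedge (v_i, v_j) \in s_k \times s_l\}$, which is exactly the claim.

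Concretely, I would proceed in three short steps. First, invoke Lemma~\ref{lemma:RstarG equals RTC of GR} to replace $R^+_G$ by $\mathrm{TC}(G_R)$. Second, invoke Lemma~\ref{lemma:RTC of GR equals Cartesian product of RTC of GRbar} to rewrite $\mathrm{TC}(G_R)$ as the set on the right-hand side; here I would note that no renaming or reconciliation is needed, because the SCC labels $s_k, s_l$ and the corresponding vertices $\overline{v_k}, \overline{v_l}$ of $\overline{G_R}$ are precisely the ones fixed by the vertex-level reduction of Section~\ref{subsubsec:Vertex-Level Reduced Graph}, so the right-hand side of Lemma~\ref{lemma:RTC of GR equals Cartesian product of RTC of GRbar} is \emph{syntactically} the right-hand side of the theorem. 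Third, conclude by transitivity.

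The only point worth an explicit sentence is the boundary behaviour of reflexive pairs, to make clear that the right-hand side computes $R^+$ and not $R^*$: a pair $(v_i, v_i)$ lies in $R^+_G$ iff there is a cycle through $v_i$ built from paths each satisfying $R$, i.e.\ iff $v_i$ belongs to an SCC $s_k$ of $G_R$ that has an internal edge, equivalently iff $\overline{G_R}$ carries the self-loop $(\overline{v_k}, \overline{v_k})$, i.e.\ $(\overline{v_k}, \overline{v_k}) \in \mathrm{TC}(\overline{G_R})$; this is already encoded in Lemma~\ref{lemma:RTC of GR equals Cartesian product of RTC of GRbar}, and the easy derivation of $R^*$ from $R^+$ is handled separately as announced in the introduction. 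Because both ingredients are already proved, there is no real obstacle: the theorem is a one-line chaining of Lemmas~\ref{lemma:RstarG equals RTC of GR} and~\ref{lemma:RTC of GR equals Cartesian product of RTC of GRbar}, and the main ``work'' is simply citing them in the correct order.
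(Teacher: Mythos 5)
Your proposal is correct and matches the paper's own proof, which likewise derives the theorem directly by chaining Lemma~\ref{lemma:RstarG equals RTC of GR} with Lemma~\ref{lemma:RTC of GR equals Cartesian product of RTC of GRbar}. Your extra remark on reflexive pairs is fine but not needed, since that behaviour is already fixed by the cited lemmas.
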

\begin{proof}
Derived from Lemmas~\ref{lemma:RstarG equals RTC of GR} and \ref{lemma:RTC of GR equals Cartesian product of RTC of GRbar}.
\end{proof}

\begin{example}
In Fig.~\ref{fig:an example of vertex level graph reduction}, TC(\textit{\textoverline{G\textsubscript{b$\cdot$c}}}) is \{(\textit{\textoverline{v\textsubscript{0}}},\textit{\textoverline{v\textsubscript{0}}}), (\textit{\textoverline{v\textsubscript{0}}},\textit{\textoverline{v\textsubscript{1}}}), (\textit{\textoverline{v\textsubscript{2}}},\textit{\textoverline{v\textsubscript{2}}})\}. For each vertex pair (\textit{\textoverline{v\textsubscript{i}}}, \textit{\textoverline{v\textsubscript{j}}}), the union of the Cartesian product of \textit{s\textsubscript{i}} and \textit{s\textsubscript{j}} is \{(\textit{v\textsubscript{2}}, \textit{v\textsubscript{2}}), (\textit{v\textsubscript{2}}, \textit{v\textsubscript{4}}), (\textit{v\textsubscript{4}}, \textit{v\textsubscript{4}}), (\textit{v\textsubscript{4}}, \textit{v\textsubscript{2}}), (\textit{v\textsubscript{2}}, \textit{v\textsubscript{6}}), (\textit{v\textsubscript{4}}, \textit{v\textsubscript{6}}), (\textit{v\textsubscript{3}}, \textit{v\textsubscript{3}}), (\textit{v\textsubscript{3}}, \textit{v\textsubscript{5}}), (\textit{v\textsubscript{5}}, \textit{v\textsubscript{3}}), (\textit{v\textsubscript{5}}, \textit{v\textsubscript{5}})\}, which is the same as TC(\textit{G}\textsubscript{\textit{b$\cdot$c}}). 
\end{example}

\subsection{Reduced Transitive Closure (RTC)} 
\label{subsec:Intermediate Representation of RstarG}
We use TC(\textit{\textoverline{G\textsubscript{R}}}) (denoted by \textit{\textoverline{R\textsuperscript{+}\hspace{-0.12cm}\textsubscript{G}}}) as a reduced transitive closure to share the result of a Kleene plus \textit{R\textsuperscript{+}}. As illustrated in Section~\ref{subsubsec:Vertex-Level Reduced Graph}, we can efficiently enumerate \textit{R\textsuperscript{+}\hspace{-0.12cm}\textsubscript{G}} by using \textit{\textoverline{R\textsuperscript{+}\hspace{-0.12cm}\textsubscript{G}}}. Moreover, \textit{\textoverline{R\textsuperscript{+}\hspace{-0.12cm}\textsubscript{G}}} is computationally simpler and smaller than \textit{R\textsuperscript{+}\hspace{-0.12cm}\textsubscript{G}} as shown in TABLE~\ref{tab:Comparison RstarG and GRbar}. Although both computing \textit{\textoverline{R\textsuperscript{+}\hspace{-0.12cm}\textsubscript{G}}} on \textit{\textoverline{G\textsubscript{R}}} and computing \textit{R\textsuperscript{+}\hspace{-0.12cm}\textsubscript{G}} on \textit{G} find pairs of vertices by traversing a graph, the former is simpler than the latter because of the following differences:

\begin{itemize}
    \item The size of the target graph to be traversed is smaller ($\mid$\textit{\textoverline{V\textsubscript{R}}}$\mid$ $<<$ $\mid$\textit{V\textsubscript{R}}$\mid$ in general). \textit{\textoverline{G\textsubscript{R}}} is a two-level reduced graph of \textit{G}, which is reduced in size. Thus, the former target graph \textit{\textoverline{G\textsubscript{R}}} is generally smaller than the latter target graph \textit{G}.
    \item The operations performed when traversing the graph are simpler. The former traverses the graph using only operations that identify reachability. The latter, on the other hand, performs additional pattern matching operations for the labels of the edges as well. Therefore, the former operations are simpler.
\end{itemize}

TABLE~\ref{tab:Comparison RstarG and GRbar} summarizes the computational and space complexity of \textit{\textoverline{R\textsuperscript{+}\hspace{-0.12cm}\textsubscript{G}}} and \textit{R\textsuperscript{+}\hspace{-0.12cm}\textsubscript{G}}. \textit{\textoverline{R\textsuperscript{+}\hspace{-0.12cm}\textsubscript{G}}} is computed on \textit{\textoverline{G\textsubscript{R}}} after two-level graph reduction: \textit{G} $\rightarrow$ \textit{G\textsubscript{R}} $\rightarrow$ \textit{\textoverline{G\textsubscript{R}}}. \textit{R\textsuperscript{+}\hspace{-0.12cm}\textsubscript{G}} is computed using \textit{R\textsubscript{G}} after evaluating \textit{R} on \textit{G}. The main operation required when reducing the graph \textit{G} $\rightarrow$ \textit{G\textsubscript{R}} is to evaluate \textit{R} on \textit{G}. As explained in Section~\ref{subsec:Regular Path Query(RPQ)}, evaluating \textit{R} on \textit{G} is computationally simpler than and relatively negligible with evaluating \textit{R\textsuperscript{+}} on \textit{G}. Therefore, we exclude the computation for reducing the graph \textit{G} $\rightarrow$ \textit{G\textsubscript{R}} from the computational complexities of both \textit{\textoverline{R\textsuperscript{+}\hspace{-0.12cm}\textsubscript{G}}} and \textit{R\textsuperscript{+}\hspace{-0.12cm}\textsubscript{G}}. The computational complexity of evaluating \textit{R\textsuperscript{+}\hspace{-0.12cm}\textsubscript{G}} is O($\mid$\textit{V\textsubscript{R}}$\mid$$\times$$\mid$\textit{E\textsubscript{R}}$\mid$). The main operation required when reducing the graph \textit{G\textsubscript{R}} $\rightarrow$ \textit{\textoverline{G\textsubscript{R}}} is to find all SCCs of \textit{G\textsubscript{R}}. The most efficient method for this operation is known as the Tarjan's algorithm\cite{Tar72}, whose computational complexity is O($\mid$\textit{V\textsubscript{R}}$\mid$+$\mid$\textit{E\textsubscript{R}}$\mid$) $<<$ O($\mid$\textit{V\textsubscript{R}}$\mid$$\times$$\mid$\textit{E\textsubscript{R}}$\mid$). Since the overhead of reducing the graph \textit{G\textsubscript{R}} $\rightarrow$ \textit{\textoverline{G\textsubscript{R}}} is negligible compared with the computational complexity of evaluating \textit{R\textsuperscript{+}} on \textit{G\textsubscript{R}}, we exclude it from the comparison in TABLE~\ref{tab:Comparison RstarG and GRbar}. The computational complexity of computing \textit{\textoverline{R\textsuperscript{+}\hspace{-0.12cm}\textsubscript{G}}} on \textit{\textoverline{G\textsubscript{R}}} is O($\mid$\textit{\textoverline{V\textsubscript{R}}}$\mid$$\times$$\mid$\textit{\textoverline{E\textsubscript{R}}}$\mid$) which is generally smaller than O($\mid$\textit{V\textsubscript{R}}$\mid$$\times$$\mid$\textit{E\textsubscript{R}}$\mid$). Therefore, \textit{\textoverline{R\textsuperscript{+}\hspace{-0.12cm}\textsubscript{G}}} has smaller computational complexity than \textit{R\textsuperscript{+}\hspace{-0.12cm}\textsubscript{G}}. This observation is demonstrated in performance evaluation of Section~\ref{sec:Performance Evaluation}. In the worst case, \textit{\textoverline{R\textsuperscript{+}\hspace{-0.12cm}\textsubscript{G}}} and \textit{R\textsuperscript{+}\hspace{-0.12cm}\textsubscript{G}} are all vertex pairs of the target graph, so that the space complexity is O($\mid$\textit{\textoverline{V\textsubscript{R}}}$\mid$\textsuperscript{\textit{2}}) and O($\mid$\textit{V\textsubscript{R}}$\mid$\textsuperscript{\textit{2}}), respectively. By the vertex-level reduction O($\mid$\textit{\textoverline{V\textsubscript{R}}}$\mid$) $<<$ O($\mid$\textit{V\textsubscript{R}}$\mid$) in general. That is, \textit{\textoverline{R\textsuperscript{+}\hspace{-0.12cm}\textsubscript{G}}} has a generally smaller space complexity than \textit{R\textsuperscript{+}\hspace{-0.12cm}\textsubscript{G}}.

\begin{table}[b]
    \vspace{0.3cm}
    \renewcommand{\arraystretch}{1.3}
    \renewcommand{\tabcolsep}{5mm}
    \caption{Complexity of \textit{R\textsuperscript{+}\hspace{-0.12cm}\textsubscript{G}} and \textit{\textoverline{R\textsuperscript{+}\hspace{-0.12cm}\textsubscript{G}}}.}
    \label{tab:Comparison RstarG and GRbar}
    \centering{}
    \begin{tabular}{|c|c|c|}
        \hline
        Complexity & \textit{R\textsuperscript{+}\hspace{-0.12cm}\textsubscript{G}} & \textit{\textoverline{R\textsuperscript{+}\hspace{-0.12cm}\textsubscript{G}}}\tabularnewline
        \hline
        \hline
        Computational& O($\mid$\textit{V\textsubscript{R}}$\mid$$\times$$\mid$\textit{E\textsubscript{R}}$\mid$) & O($\mid$\textit{\textoverline{V\textsubscript{R}}}$\mid$$\times$$\mid$\textit{\textoverline{E\textsubscript{R}}}$\mid$)\tabularnewline
        \hline
        Space& O($\mid$\textit{V\textsubscript{R}}$\mid$\textsuperscript{\textit{2}}) & O($\mid$\textit{\textoverline{V\textsubscript{R}}}$\mid$\textsuperscript{\textit{2}})\tabularnewline
        \hline
    \end{tabular}
\end{table}

\section{Reduced Transitive Closure Sharing (RTCSharing)}
\label{sec:Multiple Regular Path Queries(MRPQs) Evaluation Method}
In this section, we propose an RPQ evaluation algorithm, which we call \textit{RTCSharing}. \textit{RTCSharing} recursively evaluates RPQs calling EvalBatchUnit, which evaluates batch units of the RPQs. \textit{RTCSharing} also share the RTC among the batch units. Section~\ref{subsec:RTCSharing} proposes \textit{RTCSharing} and Section~\ref{subsec:Evaluation of Building Blocks} optimizes the evaluation of batch units.

\subsection{RTCSharing: an RPQ Evaluation Algorithm}
\label{subsec:RTCSharing}
In \textit{RTCSharing}, we first convert the given RPQ to a logically equivalent disjunctive normal form (DNF). Since all logical formulas can be converted to a logically equivalent DNF~\cite{Dav90}, we can convert all RPQs to a logically equivalent DNF treating each outermost Kleene closure as a literal. Then, we evaluate each clause in the DNF treating it as a batch unit. The batch unit is in the form of \textit{Prefix$\cdot$R\textsuperscript{+}$\cdot$Postfix} or \textit{Prefix$\cdot$R\textsuperscript{*}$\cdot$Postfix} where \textit{Prefix} and \textit{R} are any regular expressions that can include multiple or nested Kleene closures and \textit{Postfix} is a regular expression without a Kleene closure, i.e., the \textit{R\textsuperscript{+}} or \textit{R\textsuperscript{*}} is the rightmost Kleene closure in the clause. For concise notation, we denote \textit{Prefix} and \textit{Postfix} as \textit{Pre} and \textit{Post}, respectively. If the batch unit includes multiple or nested Kleene closures, we recursively evaluate it using the result of the previous recursive step until the batch unit does not include any Kleene closures. It means that an escape hatch of the recursion is the batch unit that does not include any Kleene closures. When evaluating batch units including a Kleene closure as a common sub-query, we share the RTC among batch units using Theorem~\ref{theorem:R+G equals Cartesian product of RTC of GRbar}. That is, since we evaluate it only once, the performance degradation from the DNF transformation does not occur. We can further improve the performance by optimizing the evaluation order of the batch units, and we leave the optimization issue as a future work. We explain the details of the evaluation of the batch unit in Section~\ref{subsec:Evaluation of Building Blocks}.

\SetInd{0.4em}{0.6em}
\SetFuncSty{textup}
\SetCommentSty{textup}
\begin{algorithm}[b]
\caption{RTCSharing.}
\small
\label{alg:RTCSharing}
\LinesNumbered
\SetKwFunction{FRTCSharing}{RTCSharing}
\SetKwFunction{FConvertDNF}{ConvertRPQtoDNF}
\SetKwFunction{FEvalRPQNonKleene}{EvalRPQwithoutKC}
\SetKwFunction{FDecomposeClause}{DecomposeCL}
\SetKwFunction{FComputeIR}{Compute\_RTC}
\SetKwFunction{FEvalBBU}{EvalBatchUnit}
{
	\KwIn{Query \textit{Q}}
	\KwOut{The set of results \textit{Q\textsubscript{G}}}
	\textit{Q\textsubscript{G}} $\gets \emptyset$ \\
    
    \textit{Q\_DNF} $\gets$ \FConvertDNF{\textit{Q}}
    
    \ForEach {\textit{clause} $\mathrm{\in}$ \textit{\textit{Q\_DNF}}}
	{
	    \textit{Pre}, \textit{R}, \textit{Type}, \textit{Post} $\gets$ \FDecomposeClause{\textit{clause}}\\
	    \If{\textit{Type} == $\mathrm{NULL}$}
	    {   
	    \tcc{\textit{clause} has no Kleene closure}
	        \textit{clause\textsubscript{G}} $\gets$ \FEvalRPQNonKleene{\textit{Post}}
        }    
	    \Else 
	    {
	        \tcc{\textit{clause} has a Kleene closure}
	        \textit{Pre\textsubscript{G}} $\gets$ \FRTCSharing{\textit{Pre}}\\
	        \If{$\mathrm{the\ RTC\ for}$ \textit{R} $\mathrm{does\ not\ exist}$}
            {
                \tcc{Compute RTC: \textit{\textoverline{R\textsuperscript{+}\hspace{-0.12cm}\textsubscript{G}}} and \textit{SCC}}
                \textit{R\textsubscript{G}} $\gets$ \FRTCSharing{\textit{R}}  \\
                \textit{\textoverline{R\textsuperscript{+}\hspace{-0.12cm}\textsubscript{G}}}, \textit{SCC} $\gets$ \FComputeIR{\textit{R\textsubscript{G}}} \\
            }
	        \textit{clause\textsubscript{G}} $\gets$ \FEvalBBU{{\textit{Pre\textsubscript{G}}}, \textit{\textoverline{R\textsuperscript{+}\hspace{-0.12cm}\textsubscript{G}}}, \textit{SCC}, \textit{Type}, \textit{Post}} \\
	   }
        \textit{Q\textsubscript{G}} $\gets$ \textit{Q\textsubscript{G}} $\cup$ \textit{clause\textsubscript{G}}\\
    }
}
\end{algorithm}

Algorithm~\ref{alg:RTCSharing} shows the details of \textit{RTCSharing}. First, in line 2, we convert the given RPQ \textit{Q} to a logically equivalent DNF \textit{Q\_DNF}. Here, we treat each outermost Kleene closure as a literal. Then, we evaluate each clause in \textit{Q\_DNF} treating it as the batch unit in lines 4 to 12 and union the result in line 13. In line 4, we decompose the batch unit \textit{clause} into \textit{Pre}, \textit{R}, \textit{Type} (+, *, or NULL), and \textit{Post}. If \textit{clause} does not include a Kleene closure, both \textit{Pre} and \textit{R} are $\epsilon$, \textit{Type} is NULL, and \textit{Post} is the entire clause. In line 6, we evaluate the entire clause \textit{Post} calling EvalRPQwithoutKC, which uses any existing method~\cite{Yak16}. Otherwise, \textit{R\textsuperscript{+}} or \textit{R\textsuperscript{*}} is the rightmost Kleene closure, \textit{Type} is + or *, and \textit{Post} does not include any Kleene closures. In line 8, we evaluate \textit{Pre} recursively calling RTCSharing with \textit{Pre} as a query. If the RTC for \textit{R} exists, we reuse them. Otherwise, we compute and store them to share in lines 10 and 11. In line 10, we evaluate \textit{R} recursively calling RTCSharing with \textit{R} as a query. Since \textit{R\textsubscript{G}} is the same with the edge set of \textit{G\textsubscript{R}}, we can compute the RTC using \textit{R\textsubscript{G}}. In line 11, we compute the RTC \textit{\textoverline{R\textsuperscript{+}\hspace{-0.12cm}\textsubscript{G}}} calling Compute\_RTC that uses Tarjan's algorithm~\cite{Tar72}. In line 12, we evaluate the batch unit \textit{Pre}$\cdot$\textit{R\textsuperscript{+}}$\cdot$\textit{Post} calling EvalBatchUnit. We note that, unlike \textit{Pre}, \textit{Post} is not pre-evaluated. \textit{Post} is directly handled by EvalBatchUnit. EvalBatchUnit will be given by Algorithm~\ref{alg:RPQ Evaluation Method}. 

\begin{figure}[b]
\centering
\includegraphics[scale=0.88]{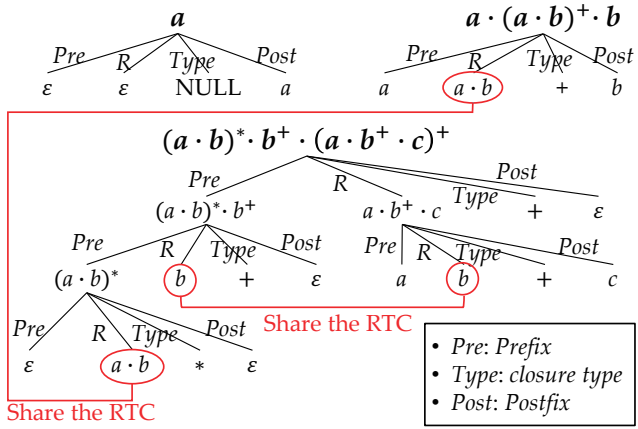}
\caption{Recursion trees of the example queries.}
\label{fig:Recursion trees of the queries in Example 7}
\end{figure}

\begin{example}
We show three examples of RPQ evaluation using \textit{RTCSharing}. To focus on the method we propose, we only consider the sequential evaluation of the given RPQs in the DNF. Fig.~\ref{fig:Recursion trees of the queries in Example 7} shows recursion trees of the RPQs.
\begin{itemize}  
    \item RPQ = \textit{a}. \\(The query does not include a Kleene closure.)
    
        We first decompose the query so that both \textit{Pre} and \textit{R} is $\epsilon$, \textit{Type} is NULL, and \textit{Post} is \textit{a}. We then evaluate the query calling EvalRPQwithoutKC in line 6.

    \item RPQ = \textit{a}$\cdot$(\textit{a}$\cdot$\textit{b})\textsuperscript{+}$\cdot$\textit{b}. \\(The query includes a Kleene closure.)
    
        We first decompose the query so that \textit{Pre} is \textit{a}, \textit{R} is \textit{a}$\cdot$\textit{b}, \textit{Type} is +, and \textit{Post} is \textit{b}. We evaluate \textit{a} in line 8 and compute the RTC for \textit{a}$\cdot$\textit{b} in lines 10 and 11. Then, we evaluate the query calling EvalBatchUnit with \textit{a\textsubscript{G}} and the RTC for \textit{a}$\cdot$\textit{b} in line 12.

    \item RPQ = (\textit{a}$\cdot$\textit{b})\textsuperscript{*}$\cdot$\textit{b}\textsuperscript{+}$\cdot$(\textit{a}$\cdot$\textit{b\textsuperscript{+}}$\cdot$\textit{c})\textsuperscript{+}. \\(The query includes multiple and nested Kleene closures.)
    
        We first decompose the query so that \textit{Pre} is (\textit{a}$\cdot$\textit{b})\textsuperscript{*}$\cdot$\textit{b}\textsuperscript{+}, \textit{R} is \textit{a}$\cdot$\textit{b\textsuperscript{+}}$\cdot$\textit{c}, \textit{Type} is +, and \textit{Post} is $\epsilon$. We evaluate (\textit{a}$\cdot$\textit{b})\textsuperscript{*}$\cdot$\textit{b}\textsuperscript{+} calling \textit{RTCSharing} recursively in line 8. In this recursive step, \textit{Pre} is (\textit{a}$\cdot$\textit{b})\textsuperscript{*} and \textit{R} is \textit{b}, \textit{Type} is + and \textit{Post} is $\epsilon$. When evaluating (\textit{a}$\cdot$\textit{b})\textsuperscript{*}, we reuse the RTC for \textit{a}$\cdot$\textit{b}, which was already computed when evaluating the above query \textit{a}$\cdot$(\textit{a}$\cdot$\textit{b})\textsuperscript{+}$\cdot$\textit{b}. When evaluating (\textit{a}$\cdot$\textit{b\textsuperscript{+}}$\cdot$\textit{c})\textsuperscript{+}, we also reuse the RTC for \textit{b}, which was already computed when evaluating (\textit{a}$\cdot$\textit{b})\textsuperscript{*}$\cdot$\textit{b}\textsuperscript{+}. Finally, we evaluate the query calling EvalBatchUnit with ((\textit{a}$\cdot$\textit{b})\textsuperscript{*}$\cdot$\textit{b}\textsuperscript{+})\textsubscript{G} and the RTC for \textit{a}$\cdot$\textit{b\textsuperscript{+}}$\cdot$\textit{c} in line 12.
\end{itemize}
\vspace{-0.6cm}
\end{example}

\subsection{Evaluation of the Batch Unit}
\label{subsec:Evaluation of Building Blocks}
When evaluating the batch unit, especially \textit{Pre}$\cdot$\textit{R\textsuperscript{+}}, \textit{useless} and \textit{redundant operations} can occur because of \textit{Pre\textsubscript{G}}. These are sources of performance degradation. Among vertex pairs in \textit{\textoverline{R\textsuperscript{+}\hspace{-0.12cm}\textsubscript{G}}}, we need only those that are connected from a vertex pair in \textit{Pre\textsubscript{G}}. That is, operations that evaluate \textit{R\textsuperscript{+}} starting from vertices that are not connected from any vertex pair in \textit{Pre\textsubscript{G}} are useless. We call them \textit{useless-1 operations}.

Vertex pairs in \textit{Pre\textsubscript{G}} whose start vertices are the same can cause duplicate results in (\textit{Pre}$\cdot$\textit{R\textsuperscript{+}})\textit{\textsubscript{G}}. The concatenations of those vertex pairs and vertex pairs in \textit{R\textsuperscript{+}\hspace{-0.12cm}\textsubscript{G}} whose end vertices are the same produce the same result. Therefore, when evaluating \textit{R\textsuperscript{+}} starting from vertex pairs in \textit{Pre\textsubscript{G}} whose start vertices are the same, operations that find vertex pairs in \textit{R\textsuperscript{+}\hspace{-0.12cm}\textsubscript{G}} whose end vertices are the same are \textit{redundant operations}. We classify the \textit{redundant operations} as \textit{redundant-1 operations} as in Definition~\ref{def:redundant 1 operations} and \textit{redundant-2 operations} as in Definition~\ref{def:redundant 2 operations}. 

\begin{definition}
\label{def:redundant 1 operations}
\textit{Redundant-1 operations} are those that find vertex pairs in \textit{R\textsuperscript{+}\hspace{-0.12cm}\textsubscript{G}} whose end vertices are the same when evaluating \textit{R\textsuperscript{+}} starting from vertex pairs in \textit{Pre\textsubscript{G}} whose start vertices are the same and end vertices belong to the same SCC.
\end{definition}

\begin{definition}
\label{def:redundant 2 operations}
\textit{Redundant-2 operations} are those that find vertex pairs in \textit{R\textsuperscript{+}\hspace{-0.12cm}\textsubscript{G}} whose end vertices are the same when evaluating \textit{R\textsuperscript{+}} starting from vertex pairs in \textit{Pre\textsubscript{G}} whose start vertices are the same but end vertices belong to different SCCs.
\end{definition}

\begin{figure}[b]
\centering
    \includegraphics[scale=0.85]{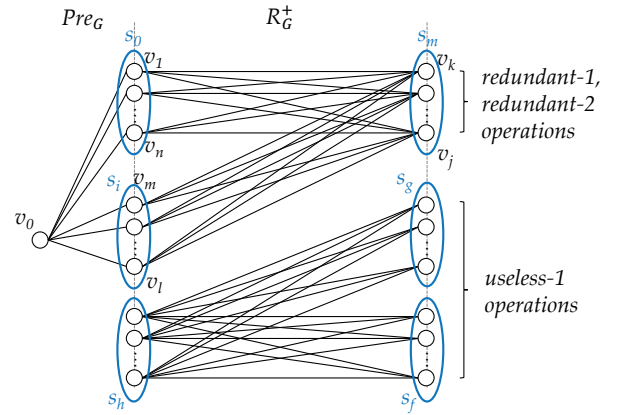}
\caption{An example of evaluating \textit{Pre}$\cdot$\textit{R\textsuperscript{+}} involving \textit{redundant} and \textit{useless operations}.}
\label{fig:an example of redundant operations and useless operations}
\end{figure} 

\begin{example}
Fig.~\ref{fig:an example of redundant operations and useless operations} shows an example evaluation of \textit{Pre}$\cdot$\textit{R\textsuperscript{+}} involving \textit{useless-1}, \textit{redundant-1}, and \textit{redundant-2 operations}. The paths satisfying \textit{R\textsuperscript{+}} from the vertices belonging to \textit{s\textsubscript{h}} to the vertices belonging to \textit{s\textsubscript{g}} and \textit{s\textsubscript{f}} are not connected from vertex pairs in \textit{Pre\textsubscript{G}}. Therefore, operations that find them are \textit{useless-1 operations}. \\
Operations that evaluate \textit{Pre}$\cdot$\textit{R\textsuperscript{+}} starting from \textit{v\textsubscript{0}} via vertices in \textit{s\textsubscript{0}} or those in  \textit{s\textsubscript{i}} produce the same results (i.e., (\textit{v\textsubscript{0}}, \textit{v\textsubscript{k}}), $\cdots$, (\textit{v\textsubscript{0}}, \textit{v\textsubscript{j}})). That is, these operations are \textit{redundant operations}. Among them, operations via vertices in the same SCC are \textit{redundant-1 operations} (e.g., the operations from \textit{v\textsubscript{0}} via vertices in \textit{s\textsubscript{0}}). Otherwise, the operations via vertices in different SCCs are \textit{redundant-2 operations} (e.g., the operations from \textit{v\textsubscript{0}} via a vertex in \textit{s\textsubscript{0}} and one in \textit{s\textsubscript{i}}).
\end{example}

These \textit{useless} and \textit{redundant operations} are the major cause of performance degradation, and we need to optimize the evaluation of batch units. To efficiently evaluate the batch units, we first formally represent the result as a relational algebra expression including \textit{\textoverline{R\textsuperscript{+}\hspace{-0.12cm}\textsubscript{G}}}. We first define the notation:
\begin{itemize}[leftmargin=0.6cm]
    \item \textit{R\textsubscript{G}}(\textit{START\_V}, \textit{END\_V}) = \{(\textit{v\textsubscript{i}}, \textit{v\textsubscript{j}})$\mid$(\textit{v\textsubscript{i}}, \textit{v\textsubscript{j}}) $\in$ \textit{R\textsubscript{G}}\}: a relation corresponding to \textit{R\textsubscript{G}} for any regular expression \textit{R}
    \item \textit{SCC}(\textit{V}, \textit{S}) = \{(\textit{v\textsubscript{i}}, \textit{s\textsubscript{j}})$\mid$\textit{v\textsubscript{i}} $\in$ \textit{s\textsubscript{j}}\}: a relation that represents the relationship between each vertex of \textit{G\textsubscript{R}} and the SCC containing the vertex
    \item \textit{\textoverline{R\textsuperscript{+}\hspace{-0.12cm}\textsubscript{G}}}(\textit{START\_S}, \textit{END\_S}) = \{(\textit{s\textsubscript{i}}, \textit{s\textsubscript{j}})$\mid$(\textit{\textoverline{v\textsubscript{i}}}, \textit{\textoverline{v\textsubscript{j}}}) $\in$ \textit{\textoverline{R\textsuperscript{+}\hspace{-0.12cm}\textsubscript{G}}}\}: a relation corresponding to \textit{\textoverline{R\textsuperscript{+}\hspace{-0.12cm}\textsubscript{G}}}, the transitive closure of \textit{\textoverline{G\textsubscript{R}}}
\end{itemize}

Using the notation above, we represent an equivalent (1) as Lemma~\ref{lemma:ST eq}.

\begin{lemma}
\label{lemma:ST eq}
\begin{equation}
\label{eq:Relational Algebra for ST}
    \begin{aligned}
    (\textit{A}\cdot \textit{B})\textsubscript{\textit{G}}&(\textit{START\_V}, \textit{END\_V})\\
    = &\pi_{\textit{A}\textsubscript{\textit{G}}.\textit{START\_V}, \textit{B}\textsubscript{\textit{G}}.\textit{END\_V}} 
    \left[\vphantom{\bowtie_{R^+_G}}
        \textit{A}\textsubscript{\textit{G}}(\textit{START\_V}, \textit{END\_V})
    \right.
    \\
    &\left.
        \bowtie_{\textit{A}\textsubscript{\textit{G}}.\textit{END\_V} = \textit{B}\textsubscript{\textit{G}}.\textit{START\_V}} \textit{B}\textsubscript{\textit{G}}(\textit{START\_V}, \textit{END\_V}) 
     \vphantom{\bowtie_{R^+_G}}\right]
    \end{aligned}
\end{equation}
\end{lemma}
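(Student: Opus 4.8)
The plan is to prove the set equality displayed in~\eqref{eq:Relational Algebra for ST} by double inclusion, unfolding each side into a statement about paths in $G$ and then invoking Definitions~\ref{def:path label} and~\ref{def:RPQ result} together with the definition of concatenation of regular expressions. The tuple attributes only serve to name the join and projection columns, so throughout I will silently identify a tuple of the join with the triple of vertices $(v_i, v_k, v_j)$ it encodes, where $v_k$ is the shared value of $A_G.\textit{END\_V}$ and $B_G.\textit{START\_V}$.

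For the forward inclusion, I would take an arbitrary pair $(v_i,v_j) \in (A\cdot B)_G$. By Definition~\ref{def:RPQ result} there is a path $p(v_i,v_j)$ in $G$ whose path label $w$ matches $A\cdot B$; by the definition of language concatenation, $w$ factors as $w = w_1 w_2$ with $w_1$ matching $A$ and $w_2$ matching $B$. Cutting $p(v_i,v_j)$ at the vertex reached immediately after the prefix of edges whose labels spell $w_1$ produces an intermediate vertex $v_k$, a sub-path $p(v_i,v_k)$ with path label $w_1$, and a sub-path $p(v_k,v_j)$ with path label $w_2$; hence $(v_i,v_k)\in A_G$ and $(v_k,v_j)\in B_G$. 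The corresponding tuple then satisfies the join condition $A_G.\textit{END\_V} = B_G.\textit{START\_V}$ (both equal $v_k$), and projecting it onto $(A_G.\textit{START\_V}, B_G.\textit{END\_V})$ yields $(v_i,v_j)$, so $(v_i,v_j)$ lies in the right-hand side.

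For the reverse inclusion I would run the same argument backwards: membership in the right-hand side gives, by the semantics of the theta-join and of projection, some $v_k$ with $(v_i,v_k)\in A_G$ and $(v_k,v_j)\in B_G$, i.e.\ a path $p(v_i,v_k)$ whose label matches $A$ and a path $p(v_k,v_j)$ whose label matches $B$. Concatenating the two edge sequences gives a single path $p(v_i,v_j)$ in $G$ whose path label is the concatenation of the two labels and therefore lies in the language of $A\cdot B$; by Definitions~\ref{def:path label} and~\ref{def:RPQ result}, $(v_i,v_j)\in (A\cdot B)_G$. Because the relational operators here carry set semantics, duplicate paths collapse to the same tuple and no multiplicity bookkeeping is required.

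The main obstacle, and the only genuinely delicate point, is the empty-word case: if $A$ (respectively $B$) accepts $\epsilon$, the ``path matching $A$'' arising in the factorization above is the zero-length path at $v_i$, so $(v_i,v_i)$ must already be a tuple of $A_G$ (for each relevant $v_i$). I would handle this by making explicit the convention that $R_G$ contains the reflexive pair $(v_i,v_i)$ exactly when $\epsilon$ matches $R$ --- the same convention the batch-unit decomposition already relies on when $\textit{Pre}$ or $\textit{Post}$ equals $\epsilon$ --- and then observing that the cut-and-concatenate steps above remain valid verbatim in these degenerate cases. The remaining facts --- that cutting a path at an interior vertex yields two well-defined sub-paths whose labels concatenate to the original, and that two paths sharing an endpoint concatenate to a path --- follow immediately from the ``sequence of edges'' definition of a path in Section~\ref{subsec:Data Model} and need no separate argument.
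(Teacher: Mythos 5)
Your proof is correct and takes essentially the same route as the paper's own proof: double inclusion, splitting a path satisfying $A\cdot B$ at an intermediate vertex $v_k$ for the forward direction and concatenating the two witness paths for the reverse direction, with the join/projection semantics supplying the tuple-level bookkeeping. Your extra paragraph on the $\epsilon$ case is a reasonable refinement of a point the paper's proof passes over silently, but it does not alter the argument.
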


\begin{proof}
Because of space limitations in proof, we omit the attribute names of the relations.
\begin{itemize}
    \item \textit{(A$\cdot$B)\textsubscript{G}} $\subseteq$ $\pi$\textsubscript{\textit{A\textsubscript{G}}.\textit{START\_V}, \textit{B\textsubscript{G}}.\textit{END\_V}}(\textit{A\textsubscript{G}}$\bowtie$\textsubscript{\textit{A\textsubscript{G}.END\_V=B\textsubscript{G}.START\_V}}\textit{B\textsubscript{G}}).
    
    Suppose \textit{(A$\cdot$B)\textsubscript{G}} $\ni$ (\textit{v\textsubscript{i}}, \textit{v\textsubscript{j}}). Then, there exists a path \textit{p}(\textit{v\textsubscript{i}}, \textit{v\textsubscript{j}}) satisfying \textit{A$\cdot$B} on \textit{G}. \textit{p}(\textit{v\textsubscript{i}}, \textit{v\textsubscript{j}}) can be decomposed into a path \textit{p}(\textit{v\textsubscript{i}}, \textit{v\textsubscript{k}}) satisfying \textit{A} and a path \textit{p}(\textit{v\textsubscript{k}}, \textit{v\textsubscript{j}}) satisfying \textit{B}. Because (\textit{v\textsubscript{i}}, \textit{v\textsubscript{k}}) $\in$ \textit{A\textsubscript{G}} and (\textit{v\textsubscript{k}}, \textit{v\textsubscript{j}}) $\in$ \textit{B\textsubscript{G}}, (\textit{v\textsubscript{i}}, \textit{v\textsubscript{j}}) $\in$ $\pi$\textsubscript{\textit{A\textsubscript{G}}.\textit{START\_V}, \textit{B\textsubscript{G}}.\textit{END\_V}}(\textit{A\textsubscript{G}}$\bowtie$\textsubscript{\textit{A\textsubscript{G}.END\_V=B\textsubscript{G}.START\_V}}\textit{B\textsubscript{G}}).

    \item \textit{(A$\cdot$B)\textsubscript{G}} $\supseteq$ $\pi$\textsubscript{\textit{A\textsubscript{G}}.\textit{START\_V}, \textit{B\textsubscript{G}}.\textit{END\_V}}(\textit{A\textsubscript{G}}$\bowtie$\textsubscript{\textit{A\textsubscript{G}.END\_V=B\textsubscript{G}.START\_V}}\textit{B\textsubscript{G}}).

    Suppose $\pi$\textsubscript{\textit{A\textsubscript{G}}.\textit{START\_V}, \textit{B\textsubscript{G}}.\textit{END\_V}}(\textit{A\textsubscript{G}}$\bowtie$\textsubscript{\textit{A\textsubscript{G}.END\_V=B\textsubscript{G}.START\_V}}\textit{B\textsubscript{G}}) $\ni$ (\textit{v\textsubscript{i}}, \textit{v\textsubscript{j}}). Then, there must exist (\textit{v\textsubscript{i}}, \textit{v\textsubscript{k}}) and (\textit{v\textsubscript{k}}, \textit{v\textsubscript{j}}) where (\textit{v\textsubscript{i}}, \textit{v\textsubscript{k}}) $\in$ \textit{A\textsubscript{G}}, (\textit{v\textsubscript{k}}, \textit{v\textsubscript{j}}) $\in$ \textit{B\textsubscript{G}}. Because there exists a path \textit{p}(\textit{v\textsubscript{i}}, \textit{v\textsubscript{k}}) satisfying \textit{A} and a path \textit{p}(\textit{v\textsubscript{k}}, \textit{v\textsubscript{j}}) satisfying \textit{B} on \textit{G}, there must exist a \textit{p}(\textit{v\textsubscript{i}}, \textit{v\textsubscript{j}}) satisfying \textit{A$\cdot$B}. Therefore, (\textit{v\textsubscript{i}}, \textit{v\textsubscript{j}}) $\in$\textit{(A$\cdot$B)\textsubscript{G}} holds.
\end{itemize}
\vspace{-0.4cm}
\end{proof}

Theorem~\ref{theorem:R+G relational algebra exprssion} shows a relational algebra expression for evaluation of RPQ \textit{R\textsuperscript{+}} on \textit{G} using the RTC. The result of RPQ \textit{R\textsuperscript{+}} on \textit{G} is represented as a relational algebra expression in Theorem~\ref{theorem:R+G relational algebra exprssion}. Equation~(\ref{eq:Relational Algebra for Kleene Closure}) represents the union of Cartesian products of \textit{s\textsubscript{i}} and \textit{s\textsubscript{j}} for every vertex pair(\textit{\textoverline{v\textsubscript{i}}}, \textit{\textoverline{v\textsubscript{j}}})(i.e., an element in the RTC). That is, we can efficiently evaluate \textit{R\textsuperscript{+}} by using the RTC.

\begin{theorem}
\label{theorem:R+G relational algebra exprssion}
The result of RPQ \textit{R\textsuperscript{+}} on \textit{G} is represented as a relational algebra expression in (\ref{eq:Relational Algebra for Kleene Closure}) for any RPQ \textit{R}. Here, $\rho$\textsubscript{\textit{SSCC}} and $\rho$\textsubscript{\textit{ESCC}} are renaming operations. 

\begin{equation}
\label{eq:Relational Algebra for Kleene Closure}
    \begin{aligned}
    &\textit{R\textsuperscript{+}\hspace{-0.12cm}\textsubscript{G}}(\textit{START\_V}, \textit{END\_V})\\
    = &\pi_{\textit{SSCC.V}, \textit{ESCC.V}}
    \left[\vphantom{\bowtie_{R^+_G}}
        \rho_{\textit{SSCC}}(\textit{SCC}(\textit{V}, \textit{S}))
    \right.\\
    &\bowtie_{\textit{S} = \textit{START\_S}} \textit{\textoverline{R\textsuperscript{+}\hspace{-0.12cm}\textsubscript{G}}}(\textit{START\_S}, \textit{END\_S}) \\
    &\left.
        \bowtie_{\textit{END\_S} = \textit{S}} \rho_{\textit{ESCC}}(\textit{SCC}(\textit{V}, \textit{S}))
    \vphantom{\bowtie_{R^+_G}}\right]
    \end{aligned}
\end{equation}
\end{theorem}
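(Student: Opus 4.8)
The plan is to derive Theorem~\ref{theorem:R+G relational algebra exprssion} from the already-established Theorem~\ref{theorem:R+G equals Cartesian product of RTC of GRbar} together with Lemma~\ref{lemma:ST eq}, by showing that the right-hand side of~(\ref{eq:Relational Algebra for Kleene Closure}) is exactly the relational-algebra rendering of the set $\{(\textit{v\textsubscript{i}}, \textit{v\textsubscript{j}}) \mid (\textit{\textoverline{v\textsubscript{k}}}, \textit{\textoverline{v\textsubscript{l}}}) \in \textrm{TC}(\textit{\textoverline{G\textsubscript{R}}}) \wedge (\textit{v\textsubscript{i}}, \textit{v\textsubscript{j}}) \in \textit{s\textsubscript{k}} \times \textit{s\textsubscript{l}}\}$. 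So the core task is a straightforward unfolding of the three-way join on the right-hand side and a matching against that set comprehension.

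First I would interpret the three-way join from the inside out. The middle relation $\textit{\textoverline{R\textsuperscript{+}\hspace{-0.12cm}\textsubscript{G}}}(\textit{START\_S}, \textit{END\_S})$ contains exactly the pairs $(\textit{s\textsubscript{k}}, \textit{s\textsubscript{l}})$ with $(\textit{\textoverline{v\textsubscript{k}}}, \textit{\textoverline{v\textsubscript{l}}}) \in \textrm{TC}(\textit{\textoverline{G\textsubscript{R}}})$, by the definition of the RTC relation. The left join with $\rho_{\textit{SSCC}}(\textit{SCC}(\textit{V}, \textit{S}))$ on $\textit{S} = \textit{START\_S}$ attaches to each such pair every vertex $\textit{v\textsubscript{i}}$ with $\textit{v\textsubscript{i}} \in \textit{s\textsubscript{k}}$ (using the defining property $\textit{SCC}(\textit{V},\textit{S}) = \{(\textit{v\textsubscript{i}}, \textit{s\textsubscript{j}}) \mid \textit{v\textsubscript{i}} \in \textit{s\textsubscript{j}}\}$), and symmetrically the right join with $\rho_{\textit{ESCC}}(\textit{SCC}(\textit{V}, \textit{S}))$ on $\textit{END\_S} = \textit{S}$ attaches every $\textit{v\textsubscript{j}} \in \textit{s\textsubscript{l}}$. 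Hence a tuple of the joined relation carries $(\textit{v\textsubscript{i}}, \textit{s\textsubscript{k}}, \textit{s\textsubscript{k}}, \textit{s\textsubscript{l}}, \textit{s\textsubscript{l}}, \textit{v\textsubscript{j}})$ precisely when $(\textit{s\textsubscript{k}}, \textit{s\textsubscript{l}}) \in \textrm{TC}(\textit{\textoverline{G\textsubscript{R}}})$, $\textit{v\textsubscript{i}} \in \textit{s\textsubscript{k}}$, and $\textit{v\textsubscript{j}} \in \textit{s\textsubscript{l}}$. Projecting onto $(\textit{SSCC.V}, \textit{ESCC.V})$ then yields exactly $(\textit{v\textsubscript{i}}, \textit{v\textsubscript{j}})$, so the right-hand side of~(\ref{eq:Relational Algebra for Kleene Closure}) equals $\{(\textit{v\textsubscript{i}}, \textit{v\textsubscript{j}}) \mid (\textit{\textoverline{v\textsubscript{k}}}, \textit{\textoverline{v\textsubscript{l}}}) \in \textrm{TC}(\textit{\textoverline{G\textsubscript{R}}}) \wedge (\textit{v\textsubscript{i}}, \textit{v\textsubscript{j}}) \in \textit{s\textsubscript{k}} \times \textit{s\textsubscript{l}}\}$. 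Invoking Theorem~\ref{theorem:R+G equals Cartesian product of RTC of GRbar}, this set is exactly $\textit{R\textsuperscript{+}\hspace{-0.12cm}\textsubscript{G}}$, which is the claim. As usual I would spell this out as a two-way set inclusion to be safe about the join semantics.

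I do not expect a serious obstacle here; the statement is essentially a translation of Theorem~\ref{theorem:R+G equals Cartesian product of RTC of GRbar} into relational-algebra notation. The only point requiring a little care — the ``main obstacle'' such as it is — is bookkeeping the renamings $\rho_{\textit{SSCC}}$ and $\rho_{\textit{ESCC}}$ so that the two copies of $\textit{SCC}(\textit{V},\textit{S})$ have disjoint attribute names and the equi-join conditions $\textit{S} = \textit{START\_S}$ and $\textit{END\_S} = \textit{S}$ refer unambiguously to the intended copies; one must check that after renaming, $\textit{SSCC.V}$ ranges over vertices of the start SCC and $\textit{ESCC.V}$ over vertices of the end SCC, with no accidental attribute collision in the natural/equi-join. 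Once the attribute scoping is fixed, the projection and the equality with $\textit{R\textsuperscript{+}\hspace{-0.12cm}\textsubscript{G}}$ follow immediately from the earlier results. Lemma~\ref{lemma:ST eq} is not strictly needed for this particular derivation but confirms that the concatenation structure in the surrounding batch-unit expression composes correctly with~(\ref{eq:Relational Algebra for Kleene Closure}).
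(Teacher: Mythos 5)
Your proposal is correct, and it takes the same high-level route as the paper: both proofs reduce the claim to Theorem~\ref{theorem:R+G equals Cartesian product of RTC of GRbar} and then argue that the right-hand side of~(\ref{eq:Relational Algebra for Kleene Closure}) is just that set written in relational algebra. The difference is only in how the translation step is discharged. The paper first rewrites Theorem~\ref{theorem:R+G equals Cartesian product of RTC of GRbar} as a tuple relational calculus expression (existentially quantifying over a tuple of \textit{\textoverline{R\textsuperscript{+}\hspace{-0.12cm}\textsubscript{G}}} and two tuples of \textit{SCC}, with the two equi-join conditions and the projection conditions spelled out as equalities) and then appeals to the standard calculus-to-algebra translation, citing Ullman, to obtain~(\ref{eq:Relational Algebra for Kleene Closure}) ``in a straightforward manner.'' You instead unfold the three-way join directly and verify the set equality semantically, which is a more self-contained and elementary rendering of the same step; the paper's version is shorter because it delegates the join/renaming bookkeeping to a textbook result, while yours makes that bookkeeping (the $\rho_{\textit{SSCC}}$/$\rho_{\textit{ESCC}}$ attribute scoping you flag) explicit. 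Your remark that Lemma~\ref{lemma:ST eq} is not needed here also matches the paper: its proof of this theorem uses only Theorem~\ref{theorem:R+G equals Cartesian product of RTC of GRbar}, and Lemma~\ref{lemma:ST eq} enters later when the batch unit \textit{Pre}$\cdot$\textit{R\textsuperscript{+}}$\cdot$\textit{Post} is expanded.
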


\begin{proof}
Theorem~\ref{theorem:R+G equals Cartesian product of RTC of GRbar} can be represented as the following tuple relational calculus expression. 
\begin{align}
\nonumber
\textit{R\textsuperscript{+}\hspace{-0.12cm}\textsubscript{G}} &= \{\textit{res} \mid (\exists \textit{rtc})(\exists \textit{sscc})(\exists \textit{escc})\ \textit{\textoverline{R\textsuperscript{+}\hspace{-0.12cm}\textsubscript{G}}}(rtc) \ \mathrm{AND} \\ 
\nonumber
&\ \ \ \ \ \ \ \ \ \ \ \ \ \ \ \textit{SCC}(\textit{sscc})\ \mathrm{AND} \ \textit{SCC}(\textit{escc}) \ \mathrm{AND} \\
&\textit{sscc}[\textit{S}] = \textit{rtc}[\textit{START\_S}] \ \mathrm{AND} \ \textit{rtc}[\textit{END\_S}] = \textit{escc}[\textit{S}] \ \mathrm{AND}  \tag*{/* (\textit{v\textsubscript{i}}, \textit{v\textsubscript{j}}) $\in$ \textit{s\textsubscript{k}}$\times$\textit{s\textsubscript{l}} */}\\
&\textit{res}[\textit{START\_V}] = \textit{sscc}[\textit{V}] \ \mathrm{AND} \ \textit{res}[\textit{END\_V}] = \textit{escc}[\textit{V}]\}  \tag*{/* projection */}
\end{align}
Equation~(2) is derived from the above relational calculus expression in a straightforward manner~\cite{Ull88}.
\end{proof}

We now expand the result of the batch unit to eliminate \textit{useless} and \textit{redundant operations}. Using Lemma~4 and (\ref{eq:Relational Algebra for ST}), the result of the batch unit can be represented as in (3) to (5). Using (\ref{eq:Relational Algebra for Kleene Closure}) we can expand (4) as in (7) to (9). When evaluating (4), we can eliminate \textit{useless-1 operations} by using \textit{\textoverline{R\textsuperscript{+}\hspace{-0.12cm}\textsubscript{G}}} instead of the entire \textit{R\textsuperscript{+}\hspace{-0.12cm}\textsubscript{G}} and evaluating \textit{R\textsuperscript{+}} starting only from tuples (i.e., vertex pairs) in \textit{Pre\textsubscript{G}}(\textit{START\_V}, \textit{END\_V}).

\begin{align}
    (\textit{Pre}&\cdot \textit{R\textsuperscript{+}}\cdot \textit{Post})\textsubscript{\textit{G}}(\textit{START\_V}, \textit{END\_V})\nonumber\\
    =&\pi_{\textit{Pre\textsubscript{G}.START\_V}, \textit{Post\textsubscript{G}.END\_V}}
    \left[\vphantom{\bowtie_{R^+_G}}
        \textit{Pre\textsubscript{G}}(\textit{START\_V}, \textit{END\_V})
    \right.\\
    &\bowtie_{\textit{Pre\textsubscript{G}.END\_V}=\textit{R\textsuperscript{+}\hspace{-0.12cm}\textsubscript{G}.START\_V}} \textit{R\textsuperscript{+}\hspace{-0.12cm}\textsubscript{G}}(\textit{START\_V}, \textit{END\_V})\\
    &\left.
        \bowtie_{\textit{R\textsuperscript{+}\hspace{-0.12cm}\textsubscript{G}.END\_V}=\textit{Post\textsubscript{G}.START\_V}} \textit{Post\textsubscript{G}}(\textit{START\_V}, \textit{END\_V})
    \vphantom{\bowtie_{R^+_G}}\right]
\end{align}
\begin{align}
    (\textit{Pre}&\cdot \textit{R\textsuperscript{+}}\cdot \textit{Post})\textsubscript{\textit{G}}(\textit{START\_V}, \textit{END\_V})\nonumber\\
    =&\pi_{\textit{Pre\textsubscript{G}.START\_V}, \textit{Post\textsubscript{G}.END\_V}}
    \left[\vphantom{\bowtie_{R^+_G}}
        \textit{Pre\textsubscript{G}}(\textit{START\_V}, \textit{END\_V})
    \right.\\
    &\bowtie_{\textit{END\_V=V}} \textit{SCC}(\textit{V}, \textit{S}) \\
    &\bowtie_{\textit{S=START\_S}} \textit{\textoverline{R\textsuperscript{+}\hspace{-0.12cm}\textsubscript{G}}}(\textit{START\_S}, \textit{END\_S}) \\
    &\bowtie_{\textit{END\_S=S}} \textit{SCC}(\textit{V}, \textit{S})\\
    &\left.
        \bowtie_{\textit{V=START\_V}} \textit{Post\textsubscript{G}}(\textit{START\_V}, \textit{END\_V})
    \vphantom{\bowtie_{R^+_G}}\right]
\end{align}

To eliminate \textit{redundant-1} and \textit{redundant-2 operations}, we union the intermediate results, i.e., the RTCs, at each join step. For each vertex pair (\textit{v\textsubscript{i}}, \textit{v\textsubscript{j}}) in \textit{Pre\textsubscript{G}}, (7) finds the SCC \textit{s\textsubscript{k}} containing \textit{v\textsubscript{j}} and returns (\textit{v\textsubscript{i}}, \textit{s\textsubscript{k}}) as the result. The intermediate results of (7) are duplicated for vertex pairs whose start vertices are the same and end vertices belong to the same SCC. Thus, we can eliminate \textit{redundant-1 operations} in the next join step by unioning (i.e., eliminating duplicates in) the intermediate results of (7). For each pair of vertex and SCC (\textit{v\textsubscript{i}}, \textit{s\textsubscript{k}}) in \textit{Pre\textsubscript{G}} $\bowtie$ SCC, (8) finds the SCC \textit{s\textsubscript{l}} that is reachable from \textit{s\textsubscript{k}} and returns (\textit{v\textsubscript{i}}, \textit{s\textsubscript{l}}) as the result. If, in the results of (7), there exist paths from vertex pairs whose start vertices are the same but end vertices belong to different SCCs to vertices belonging to the same SCC, the intermediate results of (8) for those paths are duplicated. Therefore, we find unique SCCs containing end vertices of the paths satisfying \textit{R\textsuperscript{+}} by unioning (i.e., eliminating duplicates in) the intermediate results of (8). Thus, we can eliminate subsequent \textit{redundant-2 operations}.
Because sets of vertices belonging to different SCCs are mutually disjoint, if the results of (8) are distinct, there are no duplicates in the result of (9). That is, an operation that checks duplicates for union when performing (9) is useless. We call them \textit{useless-2 operations}. We can eliminate \textit{useless-2 operations} by not performing duplicate checks. 

\begin{example}
Fig.~\ref{fig:an example of eliminating redundant operations and useless operations} shows an example of evaluating \textit{Pre$\cdot$R\textsuperscript{+}} where \textit{useless-1}, \textit{redundant-1}, and \textit{redundant-2 operations} are eliminated. We evaluate \textit{R\textsuperscript{+}} starting from vertex pairs in \textit{Pre\textsubscript{G}} using \textit{\textoverline{R\textsuperscript{+}\hspace{-0.12cm}\textsubscript{G}}} instead of \textit{R\textsuperscript{+}\hspace{-0.12cm}\textsubscript{G}}. Therefore, the paths satisfying \textit{R\textsuperscript{+}} from the vertices belonging to \textit{s\textsubscript{n+1}} to the vertices belonging to \textit{s\textsubscript{k}} and \textit{s\textsubscript{k+1}} are not found. Thus, \textit{useless-1 operations} are eliminated.\\
We next eliminate duplicates of the intermediate results of (7). Therefore, vertex pairs in (6) whose start vertices are the same and end vertices belong to the same SCC become one pair of vertex and SCC in (7). That is, (\textit{v\textsubscript{0}}, \textit{v\textsubscript{1}}), $\cdots$, (\textit{v\textsubscript{0}}, \textit{v\textsubscript{n}}) in (6) become (\textit{v\textsubscript{0}}, \textit{s\textsubscript{0}}) in (7). Therefore, when evaluating \textit{R\textsuperscript{+}} starting from (\textit{v\textsubscript{0}}, \textit{v\textsubscript{1}}), $\cdots$, (\textit{v\textsubscript{0}}, \textit{v\textsubscript{n}}) in \textit{Pre\textsubscript{G}}, we find \textit{v\textsubscript{k}}, $\cdots$, \textit{v\textsubscript{j}} only once by going through \textit{s\textsubscript{0}}. Thus, \textit{redundant-1 operations} are eliminated. \\
We also eliminate duplicates of the intermediate results of (8). Therefore, if there exist paths from vertex pairs whose start vertices are the same but end vertices belong to different SCCs to vertices belonging to a same SCC, those vertex pairs in (6) become one pair of vertex and SCC in (8). That is, (\textit{v\textsubscript{0}}, \textit{v\textsubscript{1}}) and (\textit{v\textsubscript{0}}, \textit{v\textsubscript{m}}) in (6) become (\textit{v\textsubscript{0}}, \textit{s\textsubscript{m}}) in (8). Therefore, when evaluating \textit{Pre$\cdot$R\textsuperscript{+}} starting from \textit{v\textsubscript{0}} via each vertex belonging to \textit{s\textsubscript{0}} and \textit{s\textsubscript{n}}, respectively, we find \textit{v\textsubscript{k}}, $\cdots$, \textit{v\textsubscript{j}} only once by going through the unique \textit{s\textsubscript{m}}. Thus, subsequent \textit{redundant-2 operations} are eliminated. 
\end{example}

\begin{figure}[b] 
\centering
    \includegraphics[scale=0.88]{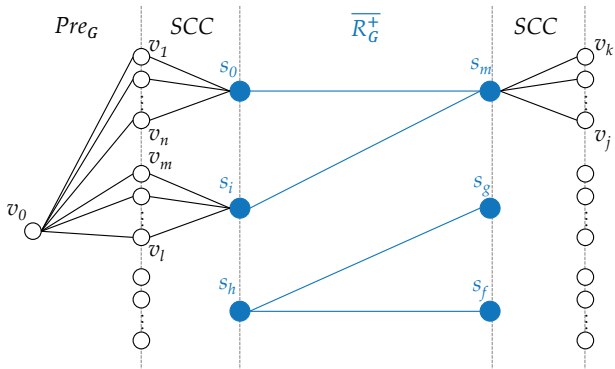}
\caption{An example of evaluating \textit{Pre$\cdot$R\textsuperscript{+}} eliminating \textit{redundant} and \textit{useless operations}.}
\label{fig:an example of eliminating redundant operations and useless operations}
\end{figure}

Algorithm~\ref{alg:RPQ Evaluation Method} shows EvalBatchUnit that evaluates a batch unit in an optimized way of eliminating \textit{useless} and \textit{redundant operations} explained so far. First, in line 1 to 3, we initialize variables storing the results of (7) to (10). Here, we deal with \textit{Pre$\cdot$R\textsuperscript{*}$\cdot$Post} by initializing \textit{ResEq9} with \textit{Pre\textsubscript{G}}. Then, in line 4, we eliminate \textit{useless-1 operations}. By line 4, lines 5 to 12 that evaluate \textit{R\textsuperscript{+}} are executed only for vertex pairs existing in \textit{Pre\textsubscript{G}}. Otherwise, these operations are \textit{useless-1 operations}. In lines 5 to 7, we eliminate \textit{redundant-1 operations}. In line 5, for each vertex pair (\textit{v\textsubscript{i}}, \textit{v\textsubscript{j}}) we find the SCC \textit{s\textsubscript{k}} of \textit{G\textsubscript{R}} containing \textit{v\textsubscript{j}}. In lines 6 and 7, we union (\textit{v\textsubscript{i}}, \textit{s\textsubscript{j}}) into \textit{ResEq7}. Then, lines 8 to 12 are executed only when (\textit{v\textsubscript{i}}, \textit{s\textsubscript{j}}) does not exist in \textit{ResEq7}. Otherwise, these operations are \textit{redundant-1 operations}. In lines 8 to 10, we eliminate \textit{redundant-2 operations}. In line 8, we find \textit{s\textsubscript{k}}, which is reachable from \textit{s\textsubscript{j}}. In lines 9 and 10, we union (\textit{v\textsubscript{i}}, \textit{s\textsubscript{k}}) into \textit{ResEq8}. Lines 11 and 12 are executed only when (\textit{v\textsubscript{i}}, \textit{s\textsubscript{k}}) does not exist in \textit{ResEq8}. Otherwise, these operations are \textit{redundant-2 operations}. In lines 11 and 12, we eliminate \textit{useless-2 operations}. Here, we add (\textit{v\textsubscript{i}}, \textit{v\textsubscript{k}}) into \textit{ResEq9} for each vertex \textit{v\textsubscript{k}} contained in \textit{s\textsubscript{k}} without duplicate checks, which are \textit{useless-2 operations}. Lines 13 to 16 that evaluate \textit{Post} are executed only for vertex pairs existing in \textit{ResEq9} (i.e., (\textit{Pre$\cdot$R\textsuperscript{+}})\textsubscript{\textit{G}}). EvalRestrictedRPQ(\textit{Post}, \textit{v\textsubscript{k}}) finds paths satisfying \textit{Post} from the vertex \textit{v\textsubscript{k}} on \textit{G} and returns (\textit{v\textsubscript{k}}, \textit{v\textsubscript{l}}) for each path \textit{p}(\textit{v\textsubscript{k}}, \textit{v\textsubscript{l}}).

\begin{algorithm}[t]
\caption{EvalBatchUnit.}
\small
{
\label{alg:RPQ Evaluation Method}
\LinesNumbered
\SetKwProg{Fn}{Function}{}{}
\SetKwFunction{FInitRes}{Init\_ResEq10}
\SetKwFunction{FEvalRestrictedRPQ}{EvalRestrictedRPQ}
\SetFuncSty{textup}
{
	\KwIn{\textit{Pre\textsubscript{G}}, \textit{\textoverline{R\textsuperscript{+}\hspace{-0.12cm}\textsubscript{G}}}, \textit{SCC}, \textit{Type}, \textit{Post}}
	\KwOut{\textit{ResEq10}}
	
    \tcc{Initialize variables storing the results of (7) to (10)}
	\textit{ResEq7}, \textit{ResEq8}, \textit{ResEq9}, \textit{ResEq10} $\gets$ $\emptyset$\\ 
	\If{\textit{Type} is *}
	{
	    \textit{ResEq9} $\gets$ \textit{Pre\textsubscript{G}} 	   \tcp*[f]{Initialization for \textit{Pre$\cdot$R\textsuperscript{*}$\cdot$Post}}
	}
	  
	\tcc{Compute \textit{Pre\textsubscript{G}} $\bowtie$ \textit{R\textsuperscript{+}\hspace{-0.12cm}\textsubscript{G}}: (7) to (9)}
    \ForEach {$\mathrm{(}$\textit{v\textsubscript{i}}, \textit{v\textsubscript{j}}$\mathrm{)}$ $\in$ \textit{Pre\textsubscript{G}}}{
	    \tcc{Eliminate \textit{useless-1 ops}.}
	    \textit{s\textsubscript{j}} $\gets$ $\pi_{\textit{S}}$($\sigma_{\textit{V = v\textsubscript{j}}}$\textit{SCC})\\
	    \If(\tcp*[f]{duplicate check for (7)}){$\mathrm{(}$\textit{v\textsubscript{i}}, \textit{s\textsubscript{j}}$\mathrm{)}$ $\not\in$ \textit{ResEq7}}
	    {

	        Insert (\textit{v\textsubscript{i}}, \textit{s\textsubscript{j}}) into \textit{ResEq7}
	        \tcp*[f]{union the result of (7)}\\
	        \tcc{Eliminate \textit{redundant-1 ops}.}
	        \ForEach{$\mathrm{(}$\textit{s\textsubscript{j}}, \textit{s\textsubscript{k}}$\mathrm{)}$ $\in$ $\sigma_{\textit{START\_S = s\textsubscript{j}}}$\textit{\textoverline{R\textsuperscript{+}\hspace{-0.12cm}\textsubscript{G}}}}{
    	        \If(\tcp*[f]{duplicate check for (8)}){$\mathrm{(}$\textit{v\textsubscript{i}}, \textit{s\textsubscript{k}}$\mathrm{)}$ $\not\in$ \textit{ResEq8}} 
	            {
	                Insert (\textit{v\textsubscript{i}}, \textit{s\textsubscript{k}}) into \textit{ResEq8}\tcp*[f]{union the result of (8)}
	                \tcc{Eliminate \textit{redundant-2 ops}.}
	                \ForEach{$\mathrm{(}$\textit{s\textsubscript{k}}, \textit{v\textsubscript{k}}$\mathrm{)}$ $\in$ $\sigma_{\textit{S = s\textsubscript{k}}}$ \textit{SCC}$\mathrm{(}$\textit{S}, \textit{V}$\mathrm{)}$}{
	                    \tcc{Eliminate \textit{useless-2 ops}.}
        	            Insert (\textit{v\textsubscript{i}}, \textit{v\textsubscript{k}}) into \textit{ResEq9}\tcp*[f]{add the result of (9)} 
	                }
	            }
	        }
	    }
	}
	\tcc{Compute (\textit{Pre}$\cdot$\textit{R\textsuperscript{+}})\textsubscript{\textit{G}} $\bowtie$ \textit{Post\textsubscript{G}}: (10)}
	\ForEach {$\mathrm{(}$\textit{v\textsubscript{i}}, \textit{v\textsubscript{k}}$\mathrm{)}$ $\in$ \textit{ResEq9}}{
        \ForEach {$\mathrm{(}$\textit{v\textsubscript{k}}, \textit{v\textsubscript{l}}$\mathrm{)}$ $\in$ \FEvalRestrictedRPQ{\textit{Post}, \textit{v\textsubscript{k}}} 
        }{
            \If(\tcp*[f]{duplicate check for (10)}){$\mathrm{(}$\textit{v\textsubscript{i}}, \textit{v\textsubscript{l}}$\mathrm{)}$ $\not\in$ \textit{ResEq10}}
            {
                Insert (\textit{v\textsubscript{i}}, \textit{v\textsubscript{l}}) into \textit{ResEq10}
                \tcp*[f]{union the result of (10)}
            }
	    }
	}
}
}
\end{algorithm}

\section{Performance Evaluation}
\label{sec:Performance Evaluation}
\normalsize
In this section, we focus on evaluating multiple RPQs in the form of the batch unit. With this, we can easily control parameters synthetically to generate complete test cases. 

\subsection{Experimental Environment}
\label{subsec:Experiment Environment}
We use synthetic datasets generated by the RMAT model~\cite{Cha04} using TrillionG~\cite{Par17} and four real datasets: Yago2s~\cite{Yag18}, Robots~\cite{Rob18}, Advogato~\cite{Adv18}, and Youtube\_Sampled~\cite{You18}. Using TrillionG we generate synthetic graphs of various average vertex degrees per label (i.e., $\frac{\mid\textit{E}\mid}{\mid \textit{V}\mid\mid\mathit{\Sigma}\mid}$) where the other characteristics except the degree are kept the same. Since TrillionG generates edge-unlabeled, directed multigraphs, we randomly added a label to each edge to make edge-labeled graphs. We denote the RMAT graph with 2\textsuperscript{13} vertices and 2\textsuperscript{\textit{N}+13} edges by RMAT\_\textit{N}. Yago2s, Robots, and Advogato are edge-labeled, directed multigraphs. Youtube\_Sampled is a subset of Youtube, which is an edge-labeled, undirected multigraph. Since the data subject to RPQs is a directed graph, we randomly added a direction to each edge of Youtube\_Sampled. We construct Youtube\_Sampled from Youtube using random vertex sampling. \textit{V} is the set of randomly sampled vertices, and \textit{E} is the set of edges between sampled vertices. We simply denote Youtube\_Sampled by Youtube. TABLE~\ref{tab:Dataset Statistics} summarizes the statistics of the datasets. 

\begin{table}[b]
    \renewcommand{\tabcolsep}{1.25mm}
    \caption{Statistics of datasets used in the experiments.}
    \label{tab:Dataset Statistics}
    \centering{}
    \begin{tabular}{|c|c|c|c|c|c|}
        \hline
        \multicolumn{2}{|c|}{Dataset} & $\mid$\textit{V}$\mid$ & $\mid$\textit{E}$\mid$ & $\mid$$\mathit{\Sigma}$$\mid$ & $\frac{\mid\textit{E}\mid}{\mid \textit{V}\mid\mid\mathit{\Sigma}\mid}$ \\
        \hline
        \hline
        \multirow{4}{*}{\minitab[c]{Real\\graph datasets}} & Yago2s & 108,048,761 & 244,796,155 & 104 & 0.02\\
        \cline{2-6}
        &Robots & 1,725 & 3,596 & 4 & 0.52\\
        \cline{2-6}
        &Advogato & 6,541 & 51,127 & 3 & 2.61\\
        \cline{2-6}
        &Youtube & 1,600 & 91,343 & 5 & 11.42\\
        \hline
        \multirow{2}{*}{\minitab[c]{Synthetic\\graph datasets}} & \multirow{2}{*}{\minitab[c]{RMAT\_\textit{N}\\(\textit{N} = 0..6)}} & \multirow{2}{*}{\minitab[c]{2\textsuperscript{\scriptsize 13}}} & \multirow{2}{*}{\minitab[c]{2\textsuperscript{\scriptsize \textit{N}+13}}} & \multirow{2}{*}{4} & \multirow{2}{*}{\minitab[c]{2\textsuperscript{\scriptsize \textit{N}-2}}}\\
        &&&&&\\
        \hline
    \end{tabular}
\end{table}

We use synthetic multiple RPQ sets where each RPQ is in the form of the batch unit. To create a controlled environment, we simulate the effects of \textit{Pre} and \textit{Post} using single labels and model \textit{R} as a concatenation of labels in $\mathit{\Sigma}$ (i.e., a clause without Kleene closure) whose length varies from 1 to 3. First, we randomly select a total of 90 \textit{R}s (one for each multiple RPQ set), 10 for each length, and then, randomly select from $\mathit{\Sigma}$ pairs of \textit{Pre} and \textit{Post} for each \textit{R}. The number of RPQs in each multiple RPQ set is 1, 2, 4, 6, 8, and 10, and a larger multiple RPQ set contains smaller multiple RPQ sets.

The multiple RPQ evaluation methods to be tested for comparison are as follows. Since the source codes for \textit{NoSharing} and \textit{FullSharing} have not been released, we implement them ourselves based on the literature~\cite{Yak16, Abu17}. 

\begin{itemize}
    \item \textit{RTCSharing}(\textit{RTC}): A method sharing \textit{\textoverline{R\textsuperscript{+}\hspace{-0.12cm}\textsubscript{G}}} among RPQs.
    \item \textit{NoSharing}(\textit{No}): A method individually evaluating RPQs using the single RPQ evaluation method proposed by Yakovets et al.~\cite{Yak16}.
    \item \textit{FullSharing}(\textit{Full}): A method sharing \textit{R\textsuperscript{+}\hspace{-0.12cm}\textsubscript{G}} among RPQs proposed by Abul-Basher~\cite{Abu17}
\end{itemize}

All experiments have been conducted on a Linux (kernel version: 2.6.32) machine equipped with an Intel Core i7-7700 CPU and 64GB main memory. All the multiple RPQ evaluation methods used in the experiment including \textit{RTCSharing} have been implemented in C++. 

\subsection{Performance Evaluation}
\label{subsec:Expermient for Multiple RPQs}
In the experiment, we compare the performance of evaluating multiple RPQs whose common sub-query is a Kleene plus \textit{R\textsuperscript{+}}. Section~\ref{subsubsec:Experiment 1} compares the performance as we vary the average vertex degrees per label (i.e., $\frac{\mid\textit{E}\mid}{\mid \textit{V}\mid\mid\mathit{\Sigma}\mid}$). In Section~\ref{subsubsec:Experiment 2}, we vary the number of RPQs constituting a multiple RPQ set to see the amortization effect of sharing the data, \textit{R\textsuperscript{+}\hspace{-0.12cm}\textsubscript{G}} or \textit{\textoverline{R\textsuperscript{+}\hspace{-0.12cm}\textsubscript{G}}}, among RPQs.

The performance metrics are multiple RPQ sets' average query response time (\textit{query response time} in short) and average size of data shared among RPQs (\textit{shared data size} in short). The query response time includes the time taken 1) to construct the two-level reduced graph in \textit{RTCSharing}, 2) to compute the shared data (\textit{\textoverline{R\textsuperscript{+}\hspace{-0.12cm}\textsubscript{G}}} in \textit{RTCSharing} or \textit{R\textsuperscript{+}\hspace{-0.12cm}\textsubscript{G}} in \textit{FullSharing}), and 3) to complete the evaluation of all RPQs. We also divide the query evaluation into three parts and compare each part between \textit {RTCSharing} and \textit {FullSharing} to show the effects of different aspects of \textit{RTCSharing} on performance. First, to show that \textit{RTCSharing} is simpler than \textit{FullSharing} for computing the data shared among RPQs, we compare the computation time of \textit{\textoverline{R\textsuperscript{+}\hspace{-0.12cm}\textsubscript{G}}} in \textit{RTCSharing} with that of \textit{R\textsuperscript{+}\hspace{-0.12cm}\textsubscript{G}} in \textit{FullSharing}. This computation time is denoted by \textit{Shared\_Data}. Since the two methods compute \textit{R\textsubscript{G}} identically, we exclude the computation time of \textit{R\textsubscript{G}} from Shared\_Data of both methods. Second, to show the effects of avoiding \textit{redundant} and \textit{useless operations} by representing each RPQ as a relational algebra expression and optimizing its evaluation, we compare the computation time of 
\textit{Pre\textsubscript{G}}(\textit{START\_V}, \textit{END\_V}) $\bowtie$ \textit{R\textsuperscript{+}\hspace{-0.12cm}\textsubscript{G}}(\textit{START\_V}, \textit{END\_V}) (denoted by \textit{Pre\textsubscript{G}} $\bowtie$ \textit{R\textsuperscript{+}\hspace{-0.12cm}\textsubscript{G}}) of each method. The only difference between two methods in \textit{Pre\textsubscript{G}} $\bowtie$ \textit{R\textsuperscript{+}\hspace{-0.12cm}\textsubscript{G}} is the optimization related to \textit{useless} and \textit{redundant operations}. Finally, we compare the computation time of the remainder evaluation (i.e., computing \textit{Pre\textsubscript{G}}, \textit{R\textsubscript{G}}, and computing (\textit{Pre$\cdot$R\textsuperscript{+}$\cdot$Post})\textsubscript{\textit{G}}
from (\textit{Pre$\cdot$R\textsuperscript{+}})\textsubscript{\textit{G}}) for which the comparison methods operate identically. Since \textit{NoSharing} does not share data among RPQs, for \textit{NoSharing}, we present only the overall query response time. The shared data size is the number of pairs in \textit{\textoverline{R\textsuperscript{+}\hspace{-0.12cm}\textsubscript{G}}} for \textit{RTCSharing} or that for \textit{R\textsuperscript{+}\hspace{-0.12cm}\textsubscript{G}} for \textit{FullSharing}. Since the size of data shared among RPQs is not dependent on the number of RPQs, we present it only in Section~\ref{subsubsec:Experiment 1}.

\subsubsection{Experiment 1: Average vertex degree per label is varied.}
\label{subsubsec:Experiment 1}
We experiment the graphs having different average vertex degree per label (\textit{vertex degree} in short). We use synthetic and real datasets described in TABLE~\ref{tab:Dataset Statistics}. We use multiple RPQ sets consisting of 4(median) RPQs among the multiple RPQ sets generated as described in Section~\ref{subsec:Experiment Environment}.



\begin{figure}[b] 
\centering
    \includegraphics[scale=1]{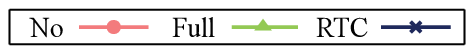}
\begin{subfigure}[t]{.49\columnwidth}
    \includegraphics[scale=1]{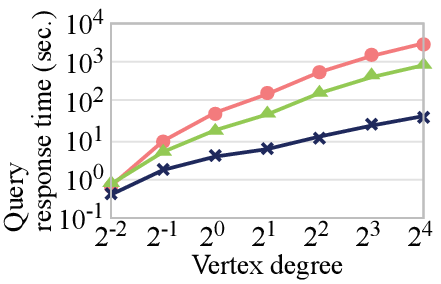}
    \caption{Synthetic datasets.}
\end{subfigure}
\begin{subfigure}[t]{.49\columnwidth}
    \includegraphics[scale=1]{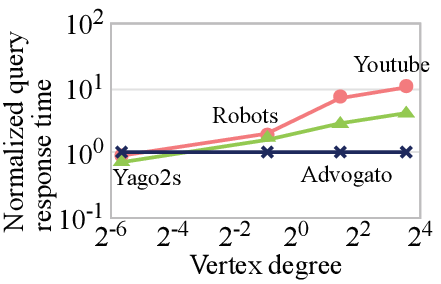}
    \caption{Real datasets.}
\end{subfigure}
\caption{Computational performances of \textit{No}, \textit{Full}, and \textit{RTC} as the vertex degree is varied ($\sharp$ RPQs = 4).}
\label{fig: Computational performances varying vertex degrees}
\end{figure} 

\begin{figure*}[t] 
\begin{subfigure}[t]{\linewidth}
\centering
\includegraphics[scale=1]{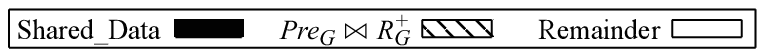}
\end{subfigure}
\begin{subfigure}[t]{1.2\columnwidth}
\includegraphics[scale=1]{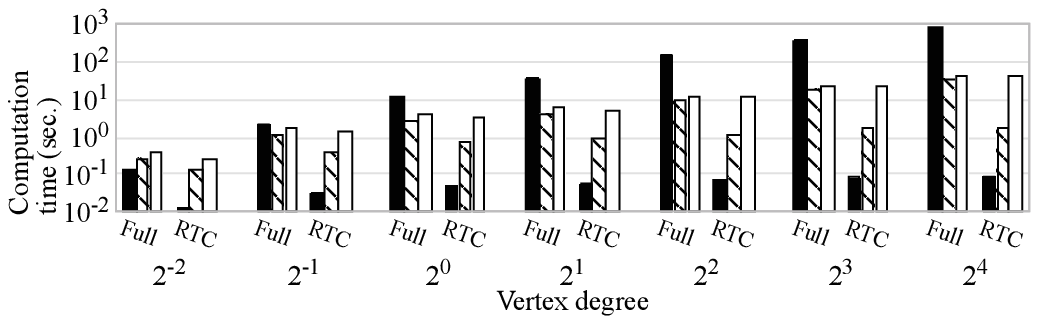}
\caption{Synthetic datasets.}
\end{subfigure}
\begin{subfigure}[t]{0.8\columnwidth}
\includegraphics[scale=1]{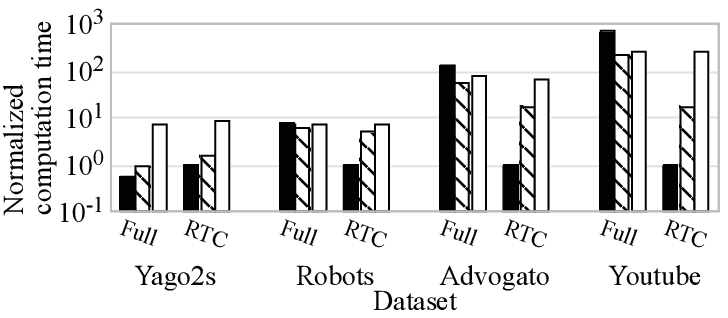}
\caption{Real datasets.}
\end{subfigure}
\vspace{-0.1cm}
\caption{Computation time of three parts of \textit{Full} and \textit{RTC} as the vertex degree is varied ($\sharp$ RPQs = 4).}
\vspace{-0.5cm}
\label{fig: Computation time of three parts on datasets}
\end{figure*} 

\textbf{Query response time} Fig.~\ref{fig: Computational performances varying vertex degrees} shows (a) the query response time on the synthetic datasets and (b) the normalized query response time on the real datasets. Not only the vertex degree but also the size of the dataset (the number of vertices and the number of edges) affect the performance. When experimenting on different real datasets, it is not easy to show the effect of the vertex degree only because the sizes of the datasets are different. To remedy this problem, we compare the relative performance by normalizing the query response time of the comparison methods by that of \textit{RTCSharing} on the real datasets. As shown in Fig.~\ref{fig: Computational performances varying vertex degrees}, \textit{RTCSharing} improves the query response time by 1.63 to 20.20 over \textit{FullSharing} on all graphs except Yago2s. 

Fig.~\ref{fig: Computation time of three parts on datasets} shows the computation time of three detailed parts. The figure confirms that \textit{RTCSharing} beats \textit{FullSharing} through the improvement of Shared\_Data and \textit{Pre\textsubscript{G}} $\bowtie$ \textit{R\textsuperscript{+}\hspace{-0.12cm}\textsubscript{G}}. \textit{RTCSharing} improves Shared\_Data by 7.78 to 9013.64 times and \textit{Pre\textsubscript{G}} $\bowtie$ \textit{R\textsuperscript{+}\hspace{-0.12cm}\textsubscript{G}} by 1.30 to 19.11 times over \textit{FullSharing} on all graphs except Yago2s. Shared\_Data is improved because the computation of \textit{\textoverline{R\textsuperscript{+}\hspace{-0.12cm}\textsubscript{G}}} is simpler than that of \textit{R\textsuperscript{+}\hspace{-0.12cm}\textsubscript{G}}. \textit{Pre\textsubscript{G}} $\bowtie$ \textit{R\textsuperscript{+}\hspace{-0.12cm}\textsubscript{G}} is improved because \textit{RTCSharing} eliminates \textit{redundant-1}, \textit{redundant-2} and \textit{useless-1 operations} by processing and optimizing each RPQ as a relational algebra expression. When we compare \textit{RTCSharing} with \textit{NoSharing}, \textit{RTCSharing} significantly improves the performance by up to 73.86 times over \textit{NoSharing} on all graphs except Yago2s. The reason is that \textit{NoSharing} computes \textit{R\textsuperscript{+}\hspace{-0.12cm}\textsubscript{G}} repeatedly for each RPQ while \textit{RTCSharing} avoids the overhead by sharing \textit{\textoverline{R\textsuperscript{+}\hspace{-0.12cm}\textsubscript{G}}} among RPQs.

For Yago2s, however, Fig.~\ref{fig: Computational performances varying vertex degrees}(b) shows that \textit{RTCSharing} is up to 1.36 times slower than \textit{FullSharing} and 1.07 times slower than \textit{NoSharing} in the query response time. Fig.~\ref{fig: Computation time of three parts on datasets}(b) shows that it is slower 1.82 times in Shared\_Data; 1.88 times in \textit{Pre\textsubscript{G}} $\bowtie$ \textit{R\textsuperscript{+}\hspace{-0.12cm}\textsubscript{G}}. Yago2s is an exceptional case, in which vertex degree is extremely small, 0.02, and the average number of vertices in an SCC of \textit{G\textsubscript{R}} is 1.00, which means that the vertex-level reduction is not very effective. Thus, \textit{G\textsubscript{R}} and \textit{\textoverline{G\textsubscript{R}}} are similar (almost the same) in size, so that the computation times of \textit{R\textsuperscript{+}\hspace{-0.12cm}\textsubscript{G}} and \textit{\textoverline{R\textsuperscript{+}\hspace{-0.12cm}\textsubscript{G}}} also are similar. However, \textit{RTCSharing} has the overhead of reducing the graph, and thus, \textit{RTCSharing} is slower than \textit{FullSharing} in Shared\_Data. \textit{RTCSharing} is slower in \textit{Pre\textsubscript{G}} $\bowtie$ \textit{R\textsuperscript{+}\hspace{-0.12cm}\textsubscript{G}} because the average number of vertices in an SCC of \textit{G\textsubscript{R}} is 1.00, and there are few \textit{redundant-1} and \textit{redundant-2 operations} that \textit{RTCSharing} can eliminate; on the other hand, \textit{RTCSharing} has additional join overheads.

In Figs.~\ref{fig: Computational performances varying vertex degrees} and~\ref{fig: Computation time of three parts on datasets}, we note that, as the vertex degree increases, \textit{RTCSharing} improves the performance gradually. In Figs.~\ref{fig: Computational performances varying vertex degrees}(a) and \ref{fig: Computation time of three parts on datasets}(a) on the synthetic datasets, as the vertex degree increases from 2\textsuperscript{-2} to 2\textsuperscript{4}, the ratio of the query response time of \textit{FullSharing} over \textit{RTCSharing} increases from 1.88 to 20.20; that of Shared\_Data from 10.40 to 9013.64; that of \textit{Pre\textsubscript{G}} $\bowtie$ \textit{R\textsuperscript{+}\hspace{-0.12cm}\textsubscript{G}} from 2.03 to 19.11. Also in Figs.~\ref{fig: Computational performances varying vertex degrees}(b) and \ref{fig: Computation time of three parts on datasets}(b) on the real datasets, as the vertex degree increases from 0.02 to 0.52, 2.61, and 11.42, the ratio of the query response time of \textit{FullSharing} over \textit{RTCSharing} increases from 0.74 to 1.63, 2.92, and 4.20; that of Shared\_Data from 0.55 to 7.78, 129.40, and 671.86; that of \textit{Pre\textsubscript{G}} $\bowtie$ \textit{R\textsuperscript{+}\hspace{-0.12cm}\textsubscript{G}} from 0.53 to 1.30, 2.95, and 11.97. The reasons are as follows. As the vertex degree increases, the average number of vertices of \textit{G\textsubscript{R}} reduced to one vertex of \textit{\textoverline{G\textsubscript{R}}} increases, so that the size of reduced graph \textit{\textoverline{G\textsubscript{R}}} decreases. Thus, the computation time for \textit{\textoverline{R\textsuperscript{+}\hspace{-0.12cm}\textsubscript{G}}} becomes smaller, and the ratio of \textit{FullSharing} over \textit{RTCSharing} also becomes large in Shared\_Data. In addition, since the number of vertices in each SCC of \textit{G\textsubscript{R}} increases, the number of \textit{redundant-1} and \textit{redundant-2 operations} that are eliminated in \textit{RTCSharing} increases, and the ratio becomes larger in \textit{Pre\textsubscript{G}} $\bowtie$ \textit{R\textsuperscript{+}\hspace{-0.12cm}\textsubscript{G}}. As a result, the ratio becomes larger in the query response time as well. We note, however, that the ratio is smaller in the query response time than in Shared\_Data or \textit{Pre\textsubscript{G}} $\bowtie$ \textit{R\textsuperscript{+}\hspace{-0.12cm}\textsubscript{G}}. The reason is as follows. As the vertex degree increases, the length of the path satisfying the RPQ becomes longer so that the number of final results increases, making the time required to find (\textit{Pre}$\cdot$\textit{R\textsuperscript{+}}$\cdot$\textit{Post})\textsubscript{\textit{G}} from (\textit{Pre}$\cdot$\textit{R\textsuperscript{+}})\textsubscript{\textit{G}} increase. Therefore, the portion of Remainder, which is largely the same in both methods, in the query response time increases, lowering the ratio of the query response time. When we compare \textit{RTCSharing} with \textit{NoSharing}, as the vertex degree increases, the ratio of \textit{NoSharing} over \textit{RTCSharing} in the query response time increases from 1.68 to 73.86 on the synthetic datasets and from 0.93 to 2.03, 7.13, and 10.45 on the real datasets. The reason is that the overhead of repeated computation of \textit{R\textsuperscript{+}\hspace{-0.12cm}\textsubscript{G}} increases in \textit{NoSharing} as the vertex degree increases.

\begin{figure}[t]
\centering
\includegraphics[scale=1]{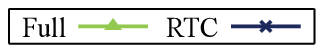}
\begin{subfigure}[b]{.49\columnwidth}
    \includegraphics[scale=1]{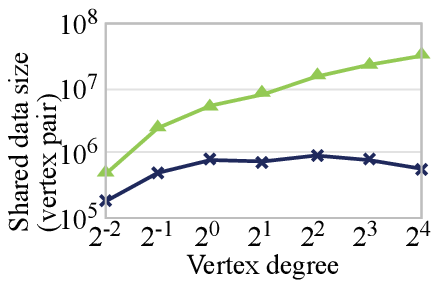}
    \caption{Synthetic datasets.}
\end{subfigure}
\begin{subfigure}[b]{.49\columnwidth}
    \includegraphics[scale=1]{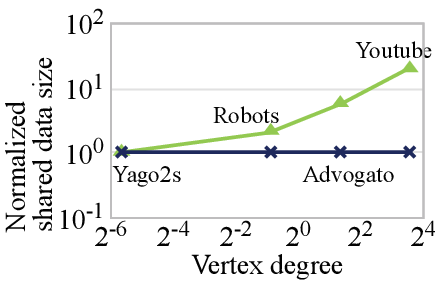}
    \caption{Real datasets.}
\end{subfigure}
\caption{Space performances of \textit{Full} (\textit{R\textsuperscript{+}\hspace{-0.12cm}\textsubscript{G}}) and \textit{RTC} (\textit{\textoverline{R\textsuperscript{+}\hspace{-0.12cm}\textsubscript{G}}}) for datasets as the vertex degree is varied ($\sharp$ RPQs = 4).}
\label{fig: Space performances}
\end{figure}

\textbf{Shared data size} Fig.~\ref{fig: Space performances} shows (a) the shared data size on the synthetic datasets and (b) the normalized shared data size on the real datasets. As the vertex degree increases, the size of \textit{R\textsuperscript{+}\hspace{-0.12cm}\textsubscript{G}} in \textit{FullSharing} increases. On the other hand, even as the vertex degree increases, the size of \textit{\textoverline{R\textsuperscript{+}\hspace{-0.12cm}\textsubscript{G}}} in \textit{RTCSharing} does not increase much. The reason is that as the vertex degree increases, the average number of vertices of \textit{G\textsubscript{R}} reduced to one vertex of \textit{\textoverline{G\textsubscript{R}}} increases, so that the size of reduced graph \textit{\textoverline{G\textsubscript{R}}} decreases (see Fig.~\ref{fig: The number of verticesof comparison methods}). As a result, as the vertex degree increases from 2\textsuperscript{-2} to 2\textsuperscript{4} on the synthetic datasets, the shared data size of \textit{FullSharing} over \textit{RTCSharing} increases from 2.61 to 54.94. On real datasets, as the vertex degree increases from 0.02 to 0.52, 2.61, and 11.42 on the real datasets, that increases from 1.05 to 2.09, 5.87, and 20.23. The reason that the Yago2s' ratio is greater than 1.00 (the number of vertices in an SCC of \textit{G\textsubscript{R}}) is that Yago2s has two exceptional \textit{R}s, which have high vertex degrees.  for them, the size of \textit{R\textsuperscript{+}\hspace{-0.12cm}\textsubscript{G}} over \textit{\textoverline{R\textsuperscript{+}\hspace{-0.12cm}\textsubscript{G}}} is only 1.05.

\begin{figure}[t]
\centering
\includegraphics[scale=1]{Figures/legend_4.eps}
\begin{subfigure}[b]{.49\columnwidth}
    \includegraphics[scale=1]{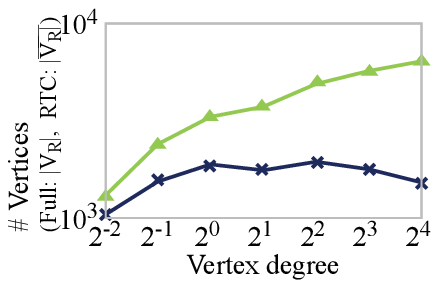}
    \caption{Synthetic datasets.}
\end{subfigure}
\begin{subfigure}[b]{.49\columnwidth}
    \includegraphics[scale=1]{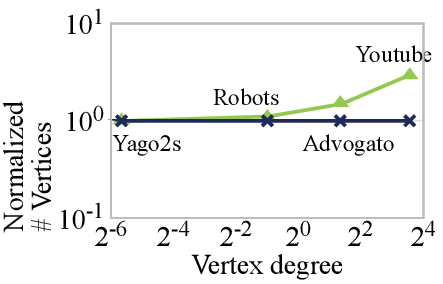}
    \caption{Real datasets.}
\end{subfigure}
\caption{The number of vertices of \textit{Full} ($\mid$\textit{V\textsubscript{R}}$\mid$) and \textit{RTC} ($\mid$\textit{\textoverline{V\textsubscript{R}}}$\mid$) for datasets as the vertex degree is varied ($\sharp$ RPQs = 4).}
\label{fig: The number of verticesof comparison methods}
\end{figure} 

\subsubsection{Experiment 2: The number of RPQs is varied.}
\label{subsubsec:Experiment 2}
In this experiment, we vary the number of RPQs constituting each multiple RPQ set. We use RMAT\_3 and Advogato, which have a median vertex degree among synthetic and real datasets, respectively. We use multiple RPQ sets consisting of 1, 2, 4, 6, 8 and 10 RPQs generated as described in Section~\ref{subsec:Experiment Environment}.

\begin{figure}[b]
\centering
 \includegraphics[scale=1]{Figures/legend_1.eps}
\begin{subfigure}[t]{.49\columnwidth}
    \includegraphics[scale=1]{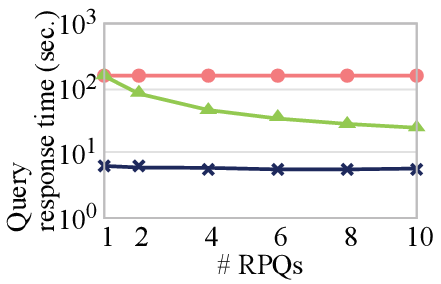}
    \caption{Synthetic dataset (RMAT\_3, vertex degree = 2).}
    \label{fig:Result of Exp 2-1}
\end{subfigure}
\begin{subfigure}[t]{.49\columnwidth}
    \includegraphics[scale=1]{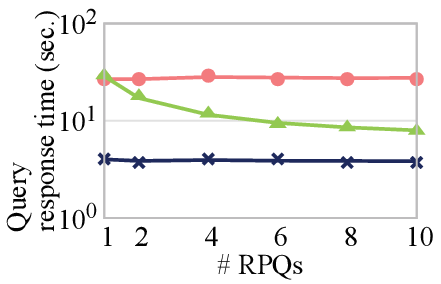}
    \caption{Real dataset (Advogato, vertex degree = 2.61).}
    \label{fig:Result of Exp 2-2}
\end{subfigure}
\caption{Computational performances of \textit{No}, \textit{Full}, and \textit{RTC} as the number of RPQs is varied.}
\label{fig:Computational performances as the number of RPQs is varied}
\end{figure}

Fig.~\ref{fig:Computational performances as the number of RPQs is varied} shows the query response time on (a) a synthetic dataset (RMAT\_3) and (b) a real dataset (Advogato) as the number of RPQs is varied. Fig.~\ref{fig: Computation time of three parts as the number of RPQs is varied} shows the computation time of each part of \textit{RTCSharing} and \textit{FullSharing}. As the number of RPQs increases, Shared\_Data in both \textit{RTCSharing} and \textit{FullSharing} decreases. This is because the computation time of \textit{R\textsuperscript{+}\hspace{-0.12cm}\textsubscript{G}} or \textit{\textoverline{R\textsuperscript{+}\hspace{-0.12cm}\textsubscript{G}}} is amortized by the number of RPQs. In the case of \textit{RTCSharing}, the amortization effect is not large since the portion of Shared\_Data in the total time is much smaller than that of \textit{FullSharing} due to efficiency of \textit{RTCSharing} (see Fig.~\ref{fig: Computation time of three parts as the number of RPQs is varied}). As a result, as the number of RPQs increases from 1 to 10, the ratio of the query response time of \textit{FullSharing} over \textit{RTCSharing} decreases from 24.35 to 4.25 on the synthetic dataset and from 7.17 to 2.08 on the real dataset. The ratio of the query response time of \textit{NoSharing} over \textit{RTCSharing} slightly increases from 23.11 to 25.38 on the synthetic dataset and from 6.76 to 7.17 on the real dataset. This gradual improvement also comes from the amortization effect of Shared\_Data in \textit{RTCSharing} despite that the portion of Shared\_Data in the query response time is small.

\section{Related Work}
\label{sec:Related Work}
\textbf{Transitive closure}
Purdom~\cite{Pur70} and Nuutila~\cite{Nuu94} proposed algorithms for computing transitive closure by using SCCs, which essentially implemented Lemma~\ref{lemma:RTC of GR equals Cartesian product of RTC of GRbar}, but without formalization. Since they are only for an unlabeled graph, they alone cannot be used to evaluate an RPQ. If combined with edge-level reduction, they can be used for evaluating the entire \textit{R\textsuperscript{+}}. However, they cannot efficiently process the RPQ \textit{Pre$\cdot$R\textsuperscript{+}} due to incurring \textit{useless and redundant operations}. Moreover, for multiple RPQs evaluation, they can only facilitate \textit{FullSharing} and cannot directly support \textit{RTCSharing}. 

\begin{figure}[t] 
\centering
\includegraphics[scale=1]{Figures/legend_3.eps}
\begin{subfigure}[t]{\columnwidth}
    \includegraphics[scale=1]{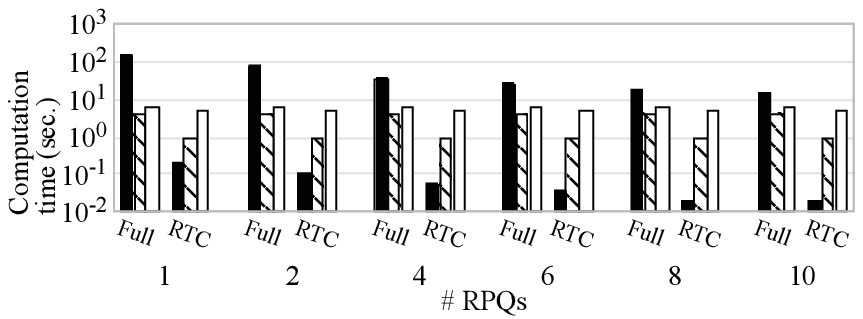}
    \caption{Synthetic dataset (RMAT\_3, vertex degree = 2).}
\end{subfigure}
\begin{subfigure}[t]{\columnwidth}
    \includegraphics[scale=1]{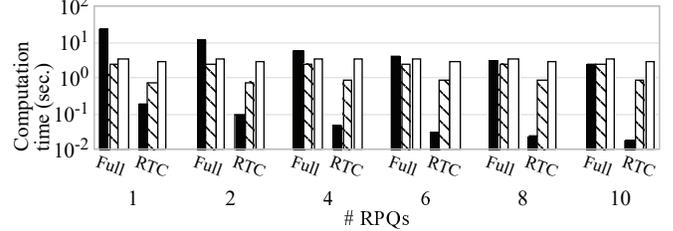}
    \caption{Real dataset (Advogato, vertex degree = 2.61).}
\end{subfigure}
\caption{Computation time of three parts of \textit{Full} and \textit{RTC} as the number of RPQs is varied.}
\label{fig: Computation time of three parts as the number of RPQs is varied}
\end{figure} 

\textbf{Reachability query}
The result of the RPQ \textit{R\textsuperscript{+}} on \textit{G} is the same with the result of the reachability query on \textit{G\textsubscript{R}}. The main costs of this query are the index construction time, the index size, and the query response time. Some of recent approaches~\cite{Jin13,Val17} answer the query using the index only. Some other recent approaches~\cite{Wei18,Su17} traverse a graph at run-time if needed. If combined with edge-level reduction, those methods can also be used for evaluating the entire \textit{R\textsuperscript{+}}. However, since all pairs of vertices O($\mid$\textit{V}$\mid$\textsuperscript{\textit{2}}) should be checked, they cannot be efficient methods. In addition, they cannot avoid \textit{redundant} and \textit{useless operations}.

\textbf{Evaluation methods of single RPQ queries}
A few methods~\cite{Ngu17, Kos12} tried to reduce the number of edges unnecessarily accessed when traversing the graph. The method proposed by Koschmieder and Leser~\cite{Kos12} compares the number of edges of each label that is present in the given RPQ and traverses the edge having the label that has the smallest number of edges first. Then, it continues to traverse other edges only when they are connected to those already traversed. Nguyen and Kim~\cite{Ngu17} considered not only the number of edges of each label but also the number of unique start and end vertices of the edges. 
Yannakakis~\cite{Yan90} proposed a method that reduced the problem of evaluating an RPQ to a problem of finding the paths between given vertices by constructing a graph. This method is entirely different from the edge-level reduced graph used in this paper. Yakovets et al.~\cite{Yak16} proposed a cost model for finding the optimal RPQ evaluation order and a method to efficiently evaluate the Kleene star \textit{R\textsuperscript{*}}. The method is similar to our edge-level graph reduction in that it traverses pre-evaluated paths rather than edges, but does not consider vertex-level graph reduction nor evaluation of multiple RPQs. Fletcher et al.~\cite{Fle16} and Tetzel et al.~\cite{Tet17} proposed methods to evaluate RPQs using indexes. The former finds all the paths in the graph of the lengths less than a certain threshold in advance, creates an index using the path labels of these paths as the keys, and uses the index to evaluate an RPQ. The latter compares the performance of the method using the compressed index and the one using the uncompressed index. Pacaci et al.~\cite{Pac20} focused on the evaluation over streaming graphs. All of these existing methods did not consider \textit{redundant} and \textit{useless operations} and focused mainly on single RPQ evaluation. 

\textbf{Evaluation methods of multiple RPQs} 
Abul-Basher~\cite{Abu17} proposed an optimization technique for evaluating multiple RPQs. This method finds a common sub-query of given multiple RPQs, evaluates it first, and shares the results among the RPQs to avoid repeated computations. This is the \textit{FullSharing} method we used for comparison in Section~\ref{sec:Performance Evaluation}. However, as explained in Section~\ref{sec:Performance Evaluation}, when the common sub-query is \textit{R\textsuperscript{+}}, its evaluation is costly. There are also \textit{redundant} and \textit{useless operations} when evaluating each RPQ using \textit{R\textsuperscript{+}\hspace{-0.12cm}\textsubscript{G}}.

\section{Conclusions}
\label{sec:Conclusion}
In this paper, we propose a notion of \textit{RPQ-based graph reduction} that replaces the evaluation of the Kleene closure on the large original graph \textit{G} to that of the transitive closure to the small graph \textit{\textoverline{G\textsubscript{R}}}. We showed that \textit{R\textsuperscript{+}\hspace{-0.12cm}\textsubscript{G}} can be easily calculated from the transitive closure of \textit{\textoverline{G\textsubscript{R}}} (i.e., RTC), which is computationally simpler and smaller than \textit{R\textsuperscript{+}\hspace{-0.12cm}\textsubscript{G}} and \textit{R\textsuperscript{*}\hspace{-0.12cm}\textsubscript{G}}, in Theorem \ref{theorem:R+G equals Cartesian product of RTC of GRbar}. We also proposed an RPQ evaluation algorithm, \textit{RTCSharing}, that takes advantage of this RTC. \textit{RTCSharing} treats each clause in the DNF of the given RPQ as a batch unit that is in the form of \textit{Prefix$\cdot$R\textsuperscript{+}$\cdot$Postfix} or \textit{Prefix$\cdot$R\textsuperscript{*}$\cdot$Postfix}. In \textit{RTCSharing}, we represent the batch unit as a relational algebra expression (join sequence) including the RTC and efficiently evaluate it sharing the RTC among batch units. We also eliminate \textit{useless-1 operations} by evaluating \textit{R\textsuperscript{+}} only starting from vertex pairs in \textit{Prefix\textsubscript{G}}, \textit{redundant-1} and \textit{redundant-2 operations} by unioning on the intermediate result of each join step, and \textit{useless-2 operations} by using the mutually disjoint property of SCCs. We formally explain that \textit{useless-1}, \textit{redundant-1}, and \textit{redundant-2 operations} are caused by having \textit{Prefix}, and \textit{useless-2 operation} is caused by structural property of \textit{\textoverline{R\textsuperscript{+}\hspace{-0.12cm}\textsubscript{G}}}. Experiments using synthetic and real datasets show that \textit{RTCSharing} significantly improves the performance by up to 73.86 times over \textit{NoSharing} and 20.20 times over \textit{FullSharing} in terms of average query response time.

\section*{Acknowledgment}
This work was partially supported by National Research Foundation (NRF) of Korea grant funded by Korean Government (MSIT) (No. 2016R1A2B4015929) and by Institute of Information \& communications Technology Planning \& evaluation (IITP) grant funded by the Korean Government (MSIT) (No. 2021-0-00859, Development of A Distributed Graph DBMS for Intelligent Processing of Big Graphs). The authors deeply appreciate anonymous reviewers’ incisive and constructive comments that helped make this paper much more readable and complete.

\end{document}